\documentclass{article}




\usepackage[final]{neurips_2024}


\usepackage[utf8]{inputenc} 
\usepackage[T1]{fontenc}    
\usepackage{hyperref}       
\usepackage{url}            
\usepackage{booktabs}       
\usepackage{amsfonts}       
\usepackage{nicefrac}       
\usepackage{microtype}      
\usepackage{xcolor}         
\usepackage{subcaption}
\captionsetup[subfigure]{font=small,
              labelfont=up,
              justification=centering 
             }
\usepackage{nicematrix}

\usepackage{float}
\usepackage{amssymb}
\usepackage{amsmath}
\usepackage{dsfont}
\usepackage{amsthm}
\usepackage{bm}
\usepackage{mathtools}
\usepackage{xcolor}
\usepackage{xspace}
\usepackage[toc]{appendix}
\newtheorem{assumption}{Assumption}
\newtheorem{definition}{Definition}
\newtheorem{lemma}{Lemma}
\newtheorem{theorem}{Theorem}
\newtheorem{remark}{Remark}
\newtheorem{example}{Example}
\newtheorem{corollary}{Corollary}

\usepackage{paralist}


\newcommand{\R}{\mathbb R}

\makeatletter
\newcommand{\vast}{\bBigg@{4}}
\newcommand{\Vast}{\bBigg@{5}}
\makeatother

\title{Identification of Analytic Nonlinear Dynamical Systems with Non-asymptotic Guarantees}

%

\author{%
  Negin Musavi \\
  \texttt{nmusavi2@illinois.edu} \\
  \And
  Ziyao Guo \\
  \texttt{ziyaog2@illinois.edu} \\
  \AND
  Geir Dullerud \\
  \texttt{dullerud@illinois.edu} \\
  \And
  Yingying Li \\
  \texttt{yl101@illinois.edu} \\
  \And
  {\normalfont Coordinated Science Laboratory}\\
  University of Illinois Urbana-Champaign
}

\begin{document}

\maketitle

\begin{abstract}
This paper focuses on the system identification of an important class of nonlinear systems: linearly parameterized nonlinear systems, which enjoys wide applications in robotics and other mechanical systems. We consider two system identification methods: least-squares estimation (LSE), which is a point estimation method; and set-membership estimation (SME), which estimates an uncertainty set that contains the true parameters. We provide non-asymptotic convergence rates for LSE and SME under  i.i.d. control inputs and control policies with i.i.d. random perturbations, both of which are considered as non-active-exploration inputs. Compared with the counter-example based on piecewise-affine systems in the literature, the success of non-active exploration in our setting relies on a key assumption on the system dynamics: we require the system functions to be real-analytic. Our results, together with the piecewise-affine counter-example, reveal the importance of differentiability in nonlinear system identification through non-active exploration. Lastly, we numerically compare our theoretical bounds with the empirical performance of LSE and SME on a pendulum example and a quadrotor example.
\end{abstract}

\section{Introduction}\label{sec:intro}Learning control-dynamical systems with statistical methodology has received significant attention in the past decade \citep{sarker2023accurate,li2023non,chen2021black,simchowitz2020naive,wagenmaker2020active,simchowitz2018learning,dean2018regret,abbasi2011regret,li2021distributed}. In particular, the estimation of linear dynamical systems, e.g. $x_{t+1}=A^* x_t +B^* u_t+w_t$, is relatively well-studied: it has been shown that non-active exploration by i.i.d. noises on control inputs $u_t$ and system disturbances $w_t$ are already enough for accurate system identification, and least square estimation (LSE) can achieve the optimal estimation convergence rate \citep{simchowitz2020naive,simchowitz2018learning}.  

However, nonlinear control systems are ubiquitous in real-world applications, e.g. robotics \citep{siciliano2010robotics,alaimo2013mathematical}, power systems \citep{simpson2016voltage}, transportation \citep{kong2015kinematic}, etc. Motivated by this, there has been a lot of attention on learning nonlinear systems recently. One natural and popular direction to study nonlinear system identification is on learning linearly parameterized nonlinear systems as defined below, which is a straightforward extension from the standard linear systems \citep{mania2022active,khosravi2023representer,foster2020learning}
\begin{equation*}\label{equ: linearly parametrized}
	x_{t+1}=\theta_* \phi(x_t, u_t)+w_t
\end{equation*}
where $\theta_*$ is a vector of unknown parameters and $\phi(x_t, u_t)$ is a known vector of  nonlinear features.

On the one hand, some classes of
these systems
are shown to enjoy similar benefits of linear systems. For example, bilinear systems can also be estimated by LSE under 
non-active exploration with i.i.d. noises~\citep{sattar2022finite}, as well as linear systems with randomly perturbed nonlinear policies \citep{li2023non}. 

On the other hand, it is also known that non-active exploration is insufficient for general linearly parameterized nonlinear systems. In particular, ~\citep{mania2022active} provides a counter example showing that non-active exploration is insufficient to learn accurate models under piece-wise affine feature functions. This motivates a sequence of follow-up work on the design of active exploration for nonlinear system estimation, which is largely motivated by the non-smooth feature functions such as ReLu in neural networks~\citep{mania2022active,kowshik2021near,khosravi2023representer}.

However, there is a big gap between bilinear systems, which is infinitely differentiable,  and the counter example by  non-smooth systems. A natural question is: to what extent can non-active exploration still work for linearly parameterized nonlinear systems?

\paragraph{Contributions.} One major contribution of this paper is showing that LSE with non-active i.i.d. noises can efficiently learn any linearly parameterized nonlinear systems with real-analytic feature functions and provide a non-asymptotic convergence rate. Notice that real-analytic feature functions are common in physical systems. For example,  polynomial systems satisfy this requirement and have wide applications in power systems~\citep{simpson2016voltage}, fluid dynamics \citep{noack2003hierarchy}, etc.  Further, trigonometric functions also satisfy the real-analytic property so a large range of robotics and mechanical systems also satisfy this requirement \citep{siciliano2010robotics,alaimo2013mathematical}.

A side product of our LSE convergence rate analysis is the convergence rate for another commonly used uncertainty quantification method in control: set membership estimation (SME).

Numerically, we test our theoretical results in pendulum and quadrotor systems. Simulations show that LSE and SME can indeed efficiently explore the system and converge to the true parameter under non-active exploration noises.

Technically, the key step in our proof is establishing the block-martingale-small-ball condition (BMSB) for general analytic feature functions, which greatly generalizes the bilinear feature function in~\cite{sattar2022finite}. Our result is built on an intuition inspired by the counter example in \citep{mania2022active}: the counter example in \citep{mania2022active} requires that some feature function is zero in a certain region, so nothing can be learned about its parameter if the states stay in this region. However, analytic functions cannot be a constant zero in a positive-measure region unless it is a constant zero everywhere. Therefore, the counter example does not work, and non-active exploration around any states can provide some useful information. Our proof formalizes this intuition by utilizing the Paley-Zygmund Petrov inequality~\citep{petrov2007lower}.

\paragraph{Related work.} Inspired by neural network parameterization, nonlinear systems of the form $x_{t+1}=\phi(A_* x_t)+w_t$ is also studied in the literature, where $\phi(\cdot)$ is a known nonlinear link function and $A_*$ is unknown. The least square cost is no longer quadratic or even convex in this case and various optimization methods have been proposed to learn this type of systems \citep{kowshik2021near,sattar2022finite,foster2020learning}.

Another related  line of research focuses on nonlinear regression with dependent data \citep{ziemann2022learning,ziemann2023tutorial,ziemann2024sharp},\footnote{$y_t=f_*(x_t)+w_t$ is considered, where $x_t$ and $y_t$ correlate with the historical data.} which can be applied to nonlinear system identification. The nonlinear regression in \citep{ziemann2022learning,ziemann2023tutorial,ziemann2024sharp} is based on non-parametric LSE and its variants, and their convergence rates under different scenarios have been analyzed. It is interesting to note that this line of work usually assumes certain persistent excitation assumptions,\footnote{For example, \citep{ziemann2022learning} assumes hyper-contractivity, and \citep{ziemann2024sharp} assumes the empirical covariance of the $\{x_t\}_{t\geq 0}$ process is invertible with high probability (Corollary 3.2).} whereas our paper demonstrates that persistent excitation holds by establishing the BMSB condition for linearly parameterized and real-analytic nonlinear control systems.

Uncertainty set estimation  is crucial for robust control under model uncertainties \cite{lu2023robust,lorenzen2019robust,li2021online}. SME is a widely adopted uncertainty set estimation method in robust adaptive control \citep{lorenzen2019robust,lu2023robust,bertsekas1971control,bai1995membership}. Recently, there is an emerging interest in analyzing SME's convergence and convergence rate for dynamical systems \citep{li2024icml,lu2019robust,xu2024convergence}, because previous analysis focus more on the linear regression problem (e.g. \citep{akccay2004size,bai1998convergence}). There are also recent applications of SME to online control \cite{yu2023online}, power systems \cite{yeh2024online}, and computer vision \cite{gao2024closure,tang2024uncertainty}.

\textbf{Notation.}
The set of non-negative real numbers is denoted by  $\mathbb{R}_{\geq 0}$. The notation $\lceil \cdot \rceil$ stands for the ceiling function. For a real vector $z\in\mathbb{R}^{n}$, $\|z\|_{2}$ represents its $\ell_{2}$ norm, $\|z\|_{\infty}$ represents its $\ell_{\infty}$ norm, and  $z^{i}$  represents its $i$-th component with $i=1 \cdots n$. The set of real symmetric matrices is denoted by $\mathbb{S}^{n}$. For a real matrix $Z$, $Z^{\intercal}$ represents its transpose, $\|Z\|_{2}$ its maximum singular value, $\|Z\|_{F}$ its Frobenius norm, $\sigma_{\min}(Z)$ its minimum singular value, $\hbox{vec}(Z)$ its vectorization obtained by stacking its columns, and for a real square matrix $Z$, $\hbox{tr}(Z)$ represents its trace.  For a real symmetric matrix $Z$, $Z \succ 0$ and $Z \succeq 0$ indicate that $Z$ is positive definite and positive semi-definite, respectively.  For a measurable set $\mathcal{E} \subset \mathbb{R}^{n}$, $\lambda^{n}(\mathcal{E})$ represents its Lebesgue measure in $\mathbb{R}^{n}$ and $\mathcal{E}^{c}$ represents its complement in $\mathbb{R}^{n}$. The notation $\emptyset$ stands for the empty set. For a set $\mathcal{T}$ of matrices $\theta \in \mathbb{R}^{n \times m}$, $\hbox{diam}(\mathcal{T})$ denotes its diameter and it is defined as $\hbox{diam}(\mathcal{T}) = \sup_{\theta, \theta' \in \mathcal{T}} \|\theta-\theta'\|_{F}$. For $z_{i} \in \mathbb{R}$ with $i=1,\cdots,\ell$, the notation $\hbox{diag}(z_{1}, \cdots, z_{\ell})$ denotes a matrix in $\mathbb{R}^{\ell \times \ell}$ with diagonal entries of $z_{i}$. This paper uses $\texttt{truncated-Gaussian}(0, \sigma_{w}, [-w_{\max}, w_{\max}])$ to refer to the
truncated Gaussian distribution generated by Gaussian distribution with zero mean and $\sigma_{w}^{2}$ variance with truncated range $[-w_{\max}, w_{\max}]$. The same applies to multi-variate truncated Gaussian distributions.

\section{Problem Formulation and Preliminaries}\label{sec:prob}

This paper studies the system identification/estimation of linearly parameterized nonlinear systems:
\begin{equation}\label{eq:sys}
\begin{aligned}
    x_{t+1} = \theta_{*}\phi\big(x_{t},u_{t}\big) + w_{t},
\end{aligned}    
\end{equation}
where  $x_{t}\in\mathbb{R}^{n_{x}}$, $u_{t}\in\mathbb{R}^{n_{u}}$, and $w_{t}\in\mathbb{R}^{n_{x}}$ denote the state, control input, and system disturbance respectively;  $\theta_*\in \R^{n_x \times n_{\phi}}$ denotes the unknown parameters to be estimated, and $\phi(\cdot)$ denotes a vector of known nonlinear feature/basis functions, i.e., $\phi(\cdot)=(\phi^{1}(\cdot), \cdots, \phi^{n_{\phi}}(\cdot))^{\intercal}$, where $\phi^{i}(\cdot):\mathbb{R}^{n_{x} + n_{u}} \rightarrow \mathbb{R}$. Without loss of generality, we consider zero initial condition, i.e., $x_0=0$, and linearly independent feature functions, that is, $\sum_{i=1}^{n_{\phi}} c_{i}\phi^{i}(x_{t},u_{t}) = 0$ implies that $c_i=0$ for all $i$.\footnote{If the features are not independent, they can be converted to independent ones since the features are known.}

The linearly parameterized nonlinear system \eqref{eq:sys} is a natural generalization of linear control systems $x_{t+1}=A_*x_t +B_*u_t+w_t$ and has wide applications in, e.g. robotics \citep{siciliano2010robotics,alaimo2013mathematical}, power systems \citep{simpson2016voltage}, transportation \citep{kong2015kinematic}, etc. Therefore, there has been a lot of research on learning this type of systems \eqref{eq:sys} by utilizing the methodology and insights from linear system estimation. For example, it is common to estimate a linearly parameterized nonlinear system by least square estimation (LSE), which enjoys desirable performance in linear systems.

In particular, LSE for \eqref{eq:sys} is reviewed below
\begin{align}\label{eq:ols-eq}
    \hat{\theta}_{T} = \underset{\hat{\theta}}{\arg\min} \sum_{t=0}^{T-1} \big\|x_{t+1} - \hat{\theta}\phi(x_{t}, u_{t}) \big\|_{2}^{2}.
\end{align}
For linear systems, LSE  enjoys the following good property:  LSE can achieve the optimal rate of convergence with i.i.d. noises $w_t$ and i.i.d. control inputs $u_t$ under proper conditions (~\citep{simchowitz2018learning}). This good property has been generalized to some linearly parameterized nonlinear systems, such as bilinear systems, and linear systems with nonlinear control policies.
Unfortunately,  general linearly parameterized nonlinear systems do not enjoy this good property of linear systems, meaning  i.i.d. random inputs may not provide enough exploration for non-smooth feature functions $\phi(\cdot)$. Therefore, a sequence of follow-up work focuses on the design of active exploration methods. 

However, due to the simplicity of  implementation, i.i.d. random inputs remain a popular  method in empirical research of system identification and  enjoy satisfactory performance sometimes, despite the lack of theoretical guarantees. Therefore, this paper aims to establish  more general conditions that allow provable convergence of  nonlinear system estimation  under i.i.d. random inputs.

In the rest of this paper, we will show that with certain smoothness and continuous conditions, i.i.d. random inputs are sufficient for estimation of \eqref{eq:sys}, which recovers the good property of linear systems.

\subsection{Assumptions}
In the following, we formally describe the smoothness and continuity conditions that enables efficient exploration of \eqref{eq:sys} by i.i.d. random inputs.

\begin{assumption}[Analytic feature functions]\label{ass: analytic}
All the components of feature vector $\phi(\cdot)$ are real analytic functions in $\R^{n_x+n_u}$,\footnote{This assumption can be relaxed to locally analytic functions in a large enough bounded set.} i.e., for every $1\leq i \leq n_{\phi}$, $\phi^i(x, u)$ is an infinitely differentiable function such that the Taylor expansion at every $(\bar x, \bar u)$ converges point-wisely to the $\phi^i(x,u)$ in a neighborhood of $(\bar x, \bar u)$.
\end{assumption}

Analytic functions include polynomial functions and trigonometric functions, which are important components of many physical systems in real-world applications, e.g. power systems, robotics, transportation systems, etc. In particular, we provide two illustrative examples below.

\begin{example}[Pendulum]\label{example: pendulum}
Many multi-link robotic manipulators can be understood as interconnected pendulum dynamics. The motion equations of a single pendulum, consisting of a mass $m$ suspended from a weightless rod of length $l$ fixed at a pivot with no friction, can be expressed as:
$$\ddot{\alpha} = - \dfrac{g}{l} \sin(\alpha) + \dfrac{u}{ml^{2}} + w,$$
where $\alpha$ represents the angle of the rod relative to the vertical axis, $g$ is the gravity constant, $u$ is the torque input, and $w$ is the disturbance applied to this system. After discretization the system dynamics can be rewritten in the structure of (\ref{eq:sys}) with the feature vector consisting of expressions involving $\sin(\alpha)$ and $u$, all of which are analytic functions. The matrix of unknown parameters contains terms of the pendulum's mass and the rod's length.
\end{example}

\begin{example}[Quadrotor \citep{alaimo2013mathematical}]\label{example: quadrotor}
Let $p \in \mathbb{R}^{3}$ and $v \in \mathbb{R}^{3}$ represent the center of mass position and velocity of the quadrotor in the inertial frame, respectively; let $\omega \in \mathbb{R}^{3}$ denote its angular velocity in the body-fixed frame, and $q \in \mathbb{R}^{4}$ denote the quaternion vector. The quadrotor's equations of motion can then be expressed as:
\begin{align*}
    \dfrac{d}{dt} \begin{pmatrix}
        p\\
        v\\
        q\\
        \omega
    \end{pmatrix} = \begin{pmatrix}
        v\\
        - g e_{z} + \frac{1}{m} Q f_{u}\\
        \frac{1}{2} \Omega q\\
        I^{-1} (\tau_{u} - \omega \times I \omega) 
    \end{pmatrix} + w,
\end{align*}
where $g$ is the gravity constant, $m$ is its total mass, $I = \emph{diag} (I_{xx}, I_{yy}, I_{zz})$ its inertia matrix with respect to the body-fixed frame, $f_{u} \in \mathbb{R}$ the total thrust, $\tau_{u} \in \mathbb{R}^{3}$ the total moment in the body-fixed frame, $e_{z} = (0, 0, 1)^{\intercal}$, $   Q = 
    \begin{pmatrix}
    q_{0}^{2} + q_{1}^{2} - q_{2}^{2} - q_{3}^{2} & 2(q_{1}q_{2} - q_{0}q_{3}) & 2(q_{0}q_{2} - q_{1}q_{3})\\
    2(q_{1}q_{2} - q_{0}q_{3}) & q_{0}^{2} - q_{1}^{2} + q_{2}^{2} - q_{3}^{2} & 2(q_{2}q_{3} - q_{0}q_{1})\\
    2(q_{1}q_{3} - q_{0}q_{2}) & 2(q_{0}q_{1} - q_{2}q_{3}) & q_{0}^{2} - q_{1}^{2} - q_{2}^{2} + q_{3}^{2}
\end{pmatrix},$ and 
$\Omega =    \begin{pmatrix}
     0 & -\omega_{1} & -\omega_{2} & -\omega_{3}\\
     \omega_{1} & 0 & \omega_{3} & -\omega_{2}\\
     \omega_{2} & -\omega_{3} & 0 & \omega_{1}\\
     \omega_{3} & \omega_{2} & -\omega_{1} & 0\\
    \end{pmatrix}$.

Similar to the pendulum example, after discretization the system dynamics can be rewritten in the structure of (\ref{eq:sys}) with the feature vector consisting of \textup{cubic polynomials} in states and inputs, which are real-analytic. The unknown parameters contain terms of the mass and inertial moments of the quadrotor.
\end{example}

Next, we introduce the assumption on the random inputs, which relies on the following definition.

\begin{definition}[Semi-continuous distribution]\label{def: semi continuous}
    We define a probability distribution $\mathbb{P}$ as semi-continuous  if there does not exist a set $E$ with Lebesgue measure zero such that $\mathbb{P}(E) = 1$.
\end{definition}

The semi-continuous distribution is a weaker requirement than  continuous distributions. In particular, any continuous distributions, or a mixture distribution with one component as a continuous distribution can satisfy the requirement of semi-continuity. {The semi-continuity can also be interpreted by the Lebesgue Decomposition Theorem (Chapter 6 of~\citep{halmos2013measure}) as discussed below. }

\begin{remark}[Connection with Lebesgue Decomposition Theorem]
Definition~\ref{def: semi continuous} can be interpreted by the Lebesgue Decomposition Theorem, which suggests that any probability distribution can be decomposed  into a purely atomic component and a non-atomic component (see more details in \cite{halmos2013measure}). 
A semi-continuous distribution as defined in Definition~\ref{def: semi continuous} requires  the distribution's non-atomic component to be nonzero. 
\end{remark}

In the following, we provide the assumptions on $w_t$ and $u_t$ using the semi-continuity definition.

\begin{assumption}[Bounded i.i.d. and semi-continuous disturbance]\label{ass: bounded iid semi continuous noises}
$w_{t}$ is i.i.d. following a semi-continuous distribution with zero
mean and a positive definite covariance matrix $\Sigma_w \succeq \sigma_{w}^{2}I_{n_{x}}\succ 0$ and a bounded support, i.e. $\|w_t\|_\infty \leq w_{\max}$ almost surely for all $t$.
\end{assumption}

The i.i.d. assumption is common in the literature of system identification for linear and nonlinear systems. As for the  bounded assumption on $w_t$, though being stronger than the sub-Gaussian assumption on $w_t$ in the literature of linear system estimation, it is a common assumption in the literature of  nonlinear system estimation \citep{mania2022active, shi2021meta, kimonline}. Further, in many physical applications, noises are usually bounded, e.g. the wind disturbances in quadrotor systems are bounded, the renewable energy injections in power systems are also bounded, etc.

The semi-continuity assumption may seem restrictive, since it rules out the discrete distributions. However, the disturbances in many realistic systems can satisfy the semi-continuity because realistic noises are usually generated from a mixture distribution where at least one component is continuous, e.g. the wind disturbances and renewable generations are continuous.

As for the control inputs $u_t$, we first impose the same assumption as Assumption \ref{ass: bounded iid semi continuous noises} for simplicity. Later in Section \ref{sec:results}, we will also discuss the relaxation of this assumption to include control policies.

\begin{assumption}[Bounded i.i.d. and semi-continuous inputs]\label{ass: bounded iid semi continuous inputs}
$u_{t}$ is i.i.d. following a semi-continuous distribution with zero mean and a positive definite covariance $\Sigma_u \succeq \sigma_{u}^{2}I_{n_{x}}\succ 0$ and bounded support, i.e. $\|u_t\|_\infty \leq u_{\max}$ almost surely for all $t$.
\end{assumption}

Lastly, we introduce our stability assumption based on the  input-to-state stability definition below.

\begin{definition}[Locally input-to-state stability (LISS)]\label{def:liss}  
Consider the general nonlinear system $x_{t+1} = f(x_{t}, d_{t})$ with $x_{t} \in \mathbb{R}^{n_{x}}$, $d_{t}\in\mathbb{R}^{n_{d}}$, $f$ being a continuous function such that $f(0, 0) = 0$. This system is called locally input-to-state stable (LISS) if there exist constants $\rho_{x} > 0$, $\rho > 0$ and functions $\gamma \in \mathcal{K}$, $\beta \in \mathcal{KL}$  such that for all $x_{0} \in \{ x_{0} \in \mathbb{R}^{n_{x}}: \|x_{0}\|_{2} \leq \rho_{x} \}$ and any input $d_{t} \in \{d \in \mathbb{R}^{n_{d}} : \sup_{t}\|d_{t}\|_{\infty} \leq \rho\}$ , it holds that $\|x_{t}\|_{2} \leq \beta\big(\|x_{0}\|_{2},t\big) + \gamma\big(\sup_{t}\|d_{t}\|_{\infty}\big)$ for all $t \geq 0$.\footnote{A function $\gamma:\mathbb{R}_{\geq 0} \rightarrow \mathbb{R}_{\geq 0}$ is a $\mathcal{K}$ function if it is continuous, strictly increasing and $\gamma(0) = 0$. 
A function $\beta:\mathbb{R}_{\geq 0} \times \mathbb{R}_{\geq 0} \rightarrow \mathbb{R}_{\geq 0}$ is a $\mathcal{KL}$ function if, for each fixed $t \geq 0$, the function $\beta(\cdot, t)$ is a $\mathcal{K}$ function, and for each fixed $s \geq 0$, the function $\beta(s,\cdot)$ is decreasing and $\beta(s,t) \rightarrow 0$  as $t \rightarrow \infty$.}
\end{definition}

\begin{assumption}[LISS system]\label{ass: iss system}
System \eqref{eq:sys} is LISS with parameters $\rho_x$ and $\rho$ such that $\rho_x\geq \|x_0\|_{2}$ and $\rho\geq \max(w_{\max}, u_{\max})$, respectively.    
\end{assumption}

Assumption \ref{ass: iss system} is imposed, together with the bounded disturbances and inputs in Assumptions \ref{ass: bounded iid semi continuous noises} and \ref{ass: bounded iid semi continuous inputs}, to guarantee bounded states during the control dynamics (for instance, see the proof of Theorem~\ref{th:BMSB-o-l}in Appendix~\ref{appendix:A}). Notably, many studies on  learning-based nonlinear control  require certain  boundedness on the states  for theoretical analysis~\cite{sattar2022non,foster2020learning,li2023online}. 

In addition, it is interesting to note that this paper only requires local stability of the dynamics, whereas several learning-based nonlinear control papers assume certain global properties, such as global exponential stability in~\citep{foster2020learning}, global exponential incremental stability in~ \citep{sattar2022non,li2023online,lin2024online}, or global Lipschitz smoothness in \citep{lee2024active}.\footnote{Global Lipschitz smoothness may exclude system dynamics with higher-order polynomials.} This difference in the dynamics assumption reflects a \textit{trade-off} with the disturbance assumptions: we assume a stronger assumption on the boundedness of disturbances and a weaker assumption on local stability, whereas much of the literature considers (sub)Gaussian distributions (which can be unbounded) but requires stronger global properties for dynamics. Since this paper is largely motivated by physical systems, which typically encounter bounded disturbances/inputs and generally only satisfy local stability~\citep{slotine1991applied}, we address this trade-off through our current set of assumptions, leaving it as an exciting future direction to consider relaxing these assumptions.
\section{Main Results}\label{sec:results}
In this section, we provide the estimation error bounds of LSE for linearly parameterized nonlinear systems under i.i.d. random inputs. The estimation error bounds rely on the establishment of probabilistic persistent excitation, which will be introduced in the first subsection. Later,
we also generalize the results to include control policies and discuss the convergence rate of another popular uncertainty quantification method in the control literature, set membership estimation, whose formal definition is deferred to the corresponding subsection.

\subsection{Probabilistic Persistent Excitation}
{It is well-known that persistent excitation (PE) is a crucial condition for successful system identification~\citep{narendra1987persistent}. In the following, we  introduce the persistent excitation condition for our linearly parameterized nonlinear systems. }

{\begin{definition}[Persistent excitation~\citep{skantze2000adaptive,sastry2011adaptive}]\label{def:p-e} System \eqref{eq:sys} is  persistently excited if there exist $s > 0$ and $m \geq 1$ such that for any $t_{0} \geq 0$, we have

\begin{align*}
    \dfrac{1}{m} \sum_{t=t_{0}}^{t_{0}+m-1}  
    \phi \big(x_{t}, u_{t}\big) \phi^{\intercal} \big(x_{t}, u_{t}\big) \succeq s^{2} I_{n_{\phi}}.
\end{align*}    
\end{definition}}

{In the stochastic setting, PE is closely related with a block-martingale small-ball (BMSB) condition  proposed in \cite{simchowitz2018learning}, which can be viewed as a probabilistic version of PE.}

\begin{definition}[BMSB~\citep{simchowitz2018learning}]\label{def:bmsb}
Let $\{\mathcal{F}_{t}\}_{t \geq 1}$ denote a filtration and let $\{y_{t}\}_{t \geq 1}$ be an $\{\mathcal{F}_{t}\}_{t \geq 1}$-adapted random process taking values in $\mathbb{R}^{n_{y}}$. We say $\{y_{t}\}_{t \geq 1}$ satisfies the $(k, \Gamma_{sb}, p)$-block martingale small-ball (BMSB) condition for a positive integer $k$, a  $\Gamma_{sb} \succ 0$, and a $p \in [0, 1]$, if  for any fixed $v \in \mathbb{R}^{n_{y}}$ such that $\|v\|_{2} = 1$, the process $\{y_{t}\}_{t \geq 1}$ satisfies $\frac{1}{k} \sum_{i=1}^{k} \mathbb{P}\big( |v^{\intercal} y_{t+i}| \geq \sqrt{v^{\intercal} \Gamma_{sb} v}\ |\ \mathcal{F}_{t}\big) \geq p$ almost surely for any $t \geq 1$.
\end{definition}

{One major contribution of this paper is formally establishing the BMSB condition  for linearly parameterized nonlinear systems with real-analytic feature functions.}

{In the following, we first investigate the open-loop system with i.i.d. inputs  and later extend the results to the closed-loop systems with inputs $u_{t} = \pi(x_{t}) + \eta_{t}$, where $\eta_{t}$ represents the noise and $\pi:\mathbb{R}^{n_x} \rightarrow \mathbb{R}^{n_u}$ denotes a control policy. The following theorem considers the open-loop systems.}

\begin{theorem}[{BMSB for open-loop systems}]\label{th:BMSB-o-l} 
Let $u_{t} = \eta_{t}$ and consider the filtration $\mathcal{F}_{t} = \mathcal{F}(w_{0}, \cdots, w_{t-1}, x_{0}, \cdots, x_{t}, \eta_{0}, \cdots, \eta_{t})$. Suppose Assumptions~\ref{ass: analytic},~\ref{ass: bounded iid semi continuous noises},~\ref{ass: bounded iid semi continuous inputs},~\ref{ass: iss system} hold, then there exist $s_{\phi} > 0$ and $p_{\phi} \in (0, 1)$ such that the $\{\mathcal{F}_{t}\}_{t \geq 1}$-adapted process $\big\{\phi\big(x_{t}, u_{t}\big)\big\}_{t \geq 1}$ satisfies the
$\big(1, s_{\phi}^{2}I_{n_{\phi}}, p_{\phi}\big)$-BMSB condition. 
\end{theorem}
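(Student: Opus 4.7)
Fix any unit vector $v \in \mathbb{R}^{n_\phi}$; since $k=1$, the BMSB condition reduces to the statement that, uniformly in $v$ and almost surely in the conditioning,
\[
\mathbb{P}\bigl(|v^\intercal \phi(x_{t+1}, u_{t+1})| \geq s_\phi \,\big|\, \mathcal{F}_t\bigr) \geq p_\phi.
\]
Given $\mathcal{F}_t$, the quantity $\bar x := \theta_* \phi(x_t, u_t)$ is deterministic, $x_{t+1} = \bar x + w_t$, and $u_{t+1} = \eta_{t+1}$, so all conditional randomness lies in the independent fresh pair $(w_t, \eta_{t+1})$, drawn from the product of the semi-continuous bounded laws specified by Assumptions~\ref{ass: bounded iid semi continuous noises} and \ref{ass: bounded iid semi continuous inputs}. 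Moreover, Assumption~\ref{ass: iss system} together with the bounded disturbances and inputs guarantees, by a standard LISS gain bookkeeping argument (to be spelled out in the appendix), that $\bar x$ lies almost surely in a fixed compact set $K \subset \mathbb{R}^{n_x}$. Hence it suffices to prove a lower bound that is uniform over $(\bar x, v)$ in the compact set $K \times \{v : \|v\|_2 = 1\}$.

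\medskip

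The core step is to lower-bound the second moment $h(\bar x, v) := \mathbb{E}[(v^\intercal \phi(\bar x + w_t, \eta_{t+1}))^2]$. For fixed $(\bar x, v)$, the function $g_{\bar x, v}(w, \eta) := v^\intercal \phi(\bar x + w, \eta)$ is real-analytic on the connected set $\mathbb{R}^{n_x + n_u}$ by Assumption~\ref{ass: analytic}. If $h(\bar x, v) = 0$, then $g_{\bar x, v}$ vanishes almost surely under the joint law of $(w_t, \eta_{t+1})$; by the semi-continuity of Definition~\ref{def: semi continuous}, this zero set must have positive Lebesgue measure, and the identity theorem for real-analytic functions on a connected domain then forces $g_{\bar x, v} \equiv 0$. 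Consequently $v^\intercal \phi(x, u) = 0$ for all $(x, u)$, which contradicts the linear independence of the features together with $\|v\|_2 = 1$. Hence $h(\bar x, v) > 0$ pointwise. Continuity of $(\bar x, v) \mapsto h(\bar x, v)$ follows from dominated convergence (using that $\phi$ is continuous and that $(w, \eta)$ has bounded support), and compactness of $K \times \{\|v\|_2 = 1\}$ then yields a uniform constant $c_1 := \min h > 0$.

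\medskip

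Once $c_1$ is secured, the remainder is a direct application of the Paley-Zygmund (Petrov) inequality cited in the contributions. Boundedness of $\phi$ on the relevant compact domain also delivers a uniform upper bound $C_2$ on $\mathbb{E}[g_{\bar x, v}^4]$. Applying Paley-Zygmund to the nonnegative variable $Z := g_{\bar x, v}(w_t, \eta_{t+1})^2$ with threshold $c_1/2$ gives $\mathbb{P}(Z \geq c_1/2 \mid \mathcal{F}_t) \geq c_1^2/(4C_2)$, which is exactly the BMSB bound with $s_\phi^2 = c_1/2$ and $p_\phi = c_1^2/(4C_2)$. The hard part is the non-degeneracy step: the interplay of real-analyticity with semi-continuity of the noise law is precisely what rules out the piecewise-affine counter example of \citep{mania2022active}, and upgrading the pointwise positivity of $h$ to a uniform constant requires the joint continuity argument together with compactness of the orbit and of the unit sphere. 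Everything after that reduces to routine moment bookkeeping.
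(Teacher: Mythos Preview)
Your proposal is correct and follows essentially the same approach as the paper: LISS to confine the orbit to a compact set, real-analyticity together with the semi-continuity assumption to show the conditional second moment $h(\bar x,v)$ is pointwise positive, compactness of $K\times\{\|v\|_2=1\}$ to upgrade this to a uniform lower bound, boundedness for the fourth-moment upper bound, and Paley--Zygmund to conclude. Your continuity-plus-compactness argument for the uniform lower bound on $h$ is slightly more direct than the paper's decomposition through the zero set $\mathcal{N}_v^z$ and the separate bound $\sup_{z,v}\mathbb{P}((w,\eta)\in\mathcal{N}_v^z)<1$, but the underlying ideas coincide.
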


\textit{Proof Sketch.}
{Intuitively, BMSB requires that any linear combination of feature functions remains positive with a non-vanishing probability. Notice that a linear combination of real-analytic functions is itself real-analytic, and the zeros of an analytic function have measure zero. These facts allow us to show that the probability of a linear combination of linearly independent feature functions equaling zero is less than one, as long as the noises follow semi-continuous distributions, by the connection of the Lebesgue measure  and the probability measure in Definition~\ref{def: semi continuous}.} 

In more detail, the proof leverages a variant of the Paley-Zygmund argument~\citep{petrov2007lower}, which provides a lower bound for the tail properties of positive random variables. Specifically, it states that the probability of a positive random variable being small depends on the ratio of its even moments. We apply this result to the random variable $|v^{T} \phi\big(x_{t+1}, u_{t+1}\big) \ |\ \mathcal{F}_{t}|$ with $\|v\|_{2} = 1$ and aim to show that the lower bound is non-trivial for any direction $v$ with $\|v\|_{2} = 1$ and any filtration $\mathcal{F}_{t}$, $t \geq 0$. We then use results from measure theory to demonstrate the existence of such a non-trivial lower bound. This is done by showing that the Lebesgue measure of the set where $|v^{T} \phi\big(x_{t+1}, u_{t+1}\big)| = 0$ is zero, and thus the even moments of $|v^{T} \phi\big(x_{t+1}, u_{t+1}\big) \ |\ \mathcal{F}_{t}|$ are non-zero, provided that the noise and disturbance distributions are semi-continuous. For further details, please refer to Appendix~\ref{appendix:A}.\qed

{It is worth pointing out that Theorem \ref{th:BMSB-o-l}  only establishes the existence of the constants $(s_{\phi}, p_{\phi})$, and deriving explicit formulas of these constants are left for future work. In particular, it can be challenging to derive a generic formula for all linearly parameterized nonlinear systems, but an exciting direction is to study reasonable sub-classes of systems and construct their corresponding formulas of the constants $(s_{\phi}, p_{\phi})$.}

\subsection{Non-asymptotic Bounds for LSE}
We are now prepared to present the non-asymptotic convergence rate for the LSE in learning the unknown parameters of the system~\eqref{eq:sys}.

\begin{theorem}[{LSE's convergence rate for open-loop systems}]\label{th:lse-con-o-l}
Consider the dynamical system described in~(\ref{eq:sys}) with i.i.d.  inputs $u_t=\eta_t$ and assume that Assumptions~\ref{ass: analytic}, \ref{ass: bounded iid semi continuous noises}, \ref{ass: bounded iid semi continuous inputs}, \ref{ass: iss system} are satisfied. Let $s_{\phi}$ and $p_{\phi}$ be as defined in Theorem~\ref{th:BMSB-o-l}, and define $\bar{b}_{\phi} = \sup_{t \geq 0} \mathbb{E}\big[\|\phi(z_{t})\|_{2}^{2}\big]$. For a fixed $\delta \in (0, 1)$ and $T \geq 1$, if $T$ satisfies the condition
\begin{align*}
    T \geq \dfrac{10}{p_{\phi}} \bigg( \log\bigg(\dfrac{1}{\delta}\bigg) + 2n_{\phi}\log\bigg(\dfrac{10}{p_{\phi}}\bigg) +  n_{\phi}\log\bigg(\dfrac{\bar{b}_{\phi}}{\delta s_{\phi}^{2}}\bigg) \bigg),
\end{align*}
then LSE's estimation $\hat \theta_T$ satisfies the following error bound with probability at least $1-3\delta$.
\begin{align*}
 \big \|\hat{\theta}_{T} - \theta_{*} \big\|_{2} \leq \dfrac{90 \sigma_{w}}{p_{\phi}} \sqrt{ \dfrac{n_{x} +\log\bigg(\frac{1}{\delta}\bigg) + n_{\phi}\log\bigg(\frac{10}{p_{\phi}}\bigg) +  n_{\phi}\log\bigg(\frac{\bar{b}_{\phi}}{\delta s_{\phi}^{2}}\bigg)}{Ts_{\phi}^{2}}}.
\end{align*}
\end{theorem}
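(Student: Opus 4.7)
The plan is to follow the standard Simchowitz-style analysis that reduces the LSE error bound to two ingredients: (a) a lower bound on the smallest eigenvalue of the empirical Gram matrix $V_T := \sum_{t=0}^{T-1} \phi(x_t,u_t)\phi(x_t,u_t)^{\intercal}$, and (b) a self-normalized upper bound on the cross term $S_T := \sum_{t=0}^{T-1} \phi(x_t,u_t) w_t^{\intercal}$. From the normal equations one has $(\hat\theta_T - \theta_*)^{\intercal} = V_T^{-1} S_T$, so $\|\hat\theta_T - \theta_*\|_2 \leq \|V_T^{-1/2} S_T\|_2 \cdot \sigma_{\min}(V_T)^{-1/2}$, and the claim reduces to controlling each factor and combining them via a union bound (which is what produces the $1-3\delta$ probability).

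For (a), the key input is Theorem~\ref{th:BMSB-o-l}, which guarantees that $\{\phi(x_t,u_t)\}$ satisfies the $(1, s_\phi^2 I_{n_\phi}, p_\phi)$-BMSB condition. I would invoke the small-ball matrix lemma of \citep{simchowitz2018learning} (their Proposition 2.5) to convert BMSB into a high-probability lower bound of the form $V_T \succeq c\, T p_\phi s_\phi^2 I_{n_\phi}$. The proof of that lemma employs an $\varepsilon$-net argument over the unit sphere $\mathcal{S}^{n_\phi - 1}$ of dimension $n_\phi$, and the cardinality of this net together with a Markov-type tail bound on $\|\phi(x_t,u_t)\|_2^2$ (which is where $\bar b_\phi$ enters) is exactly what produces the $n_\phi \log(10/p_\phi)$ and $n_\phi \log(\bar b_\phi/(\delta s_\phi^2))$ terms in the required lower bound on the horizon $T$. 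The boundedness of trajectories and hence finiteness of $\bar b_\phi$ is guaranteed by Assumption~\ref{ass: iss system} combined with the bounded-support Assumptions~\ref{ass: bounded iid semi continuous noises}--\ref{ass: bounded iid semi continuous inputs}.

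For (b), the plan is to apply a self-normalized martingale tail bound of the Abbasi-Yadkori-P\'al-Szepesv\'ari type to the martingale difference sequence $\{\phi(x_t,u_t) w_t^{\intercal}\}$. Since $\phi(x_t, u_t)$ is $\mathcal{F}_t$-measurable while $w_t$ has bounded support, zero mean and covariance $\Sigma_w$ conditional on $\mathcal{F}_t$, $w_t$ is sub-Gaussian with proxy of order $\sigma_w$ coordinate-wise. The self-normalized bound then yields, with probability at least $1-\delta$, an inequality of the shape $\|V_T^{-1/2} S_T\|_2 \lesssim \sigma_w \sqrt{n_x + \log(1/\delta) + \log\det(I + V_T/\lambda)}$; the determinant term can be controlled by $n_\phi \log(1 + T\bar b_\phi/\lambda)$ using the deterministic upper bound on $\|\phi(x_t,u_t)\|_2$. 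Dividing by $\sqrt{\sigma_{\min}(V_T)}$ from step (a) and simplifying constants produces the stated $O(\sigma_w/(p_\phi \sqrt{T s_\phi^2}))$ rate.

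The conceptually hard part has already been done in Theorem~\ref{th:BMSB-o-l} — turning analyticity and semi-continuous noise into a quantitative BMSB condition through the Paley-Zygmund-Petrov argument. Given BMSB, the remaining obstacle is bookkeeping: matching the specific constants $10/p_\phi$, $90/p_\phi$ in the theorem statement requires careful instantiation of the Simchowitz et al. small-ball lemma with the correct choice of truncation threshold (controlled by $\bar b_\phi$) and careful tracking of the union-bound budget across the three random events (the small-ball Gram lower bound, the self-normalized noise upper bound, and the event on which trajectories remain bounded so that the moment quantity $\bar b_\phi$ is meaningful). I expect no new ideas beyond Theorem~\ref{th:BMSB-o-l} to be required.
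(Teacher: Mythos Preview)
Your proposal is essentially correct and follows the same route as the paper: the paper simply invokes the LSE meta-theorem (Theorem~2.4) of \cite{simchowitz2018learning} as a black box after verifying its three hypotheses (sub-Gaussian $w_t\mid\mathcal F_t$, the $(1,s_\phi^2 I,p_\phi)$-BMSB condition from Theorem~\ref{th:BMSB-o-l}, and an upper covariance bound $\bar\Gamma$), whereas you unpack that meta-theorem into its two internal components (small-ball Gram lower bound plus self-normalized martingale tail). One small correction to your accounting: the third event in the $1-3\delta$ union bound is not ``trajectories remain bounded'' --- under Assumption~\ref{ass: iss system} with bounded $w_t,u_t$ that holds deterministically, so $\bar b_\phi$ is finite with probability one --- but rather the Markov-inequality event $\{\sum_t \phi(z_t)\phi(z_t)^\intercal \preceq T\bar\Gamma\}$ with $\bar\Gamma=(\bar b_\phi/\delta)I_{n_\phi}$, which is exactly the ``Markov-type tail bound'' you already mentioned in part~(a); that is where the $n_\phi\log(\bar b_\phi/(\delta s_\phi^2))$ term and one of the three $\delta$'s come from.
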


{The proof relies on Theorem \ref{th:BMSB-o-l} and Theorem 2.4 in~\citep{simchowitz2018learning}. The complete proof is provided in Appendix~\ref{app:B1}.}

Theorem \ref{th:lse-con-o-l} demonstrates that LSE converges to the true parameters under random control inputs and random disturbances (non-active exploration) at a rate of $\frac{1}{\sqrt T}$  for linearly parameterized nonlinear systems. This is consistent with the convergence rates of LSE for linear systems in terms of $T$. 

{Regarding the dimension dependence in the convergence rate,  the explicit dependence is  $\sqrt{n_x+n_\phi}$, where $n_x$ and $n_\phi$ refer to the dimensions of the state and the feature vector, respectively. Besides, it is worth mentioning that other parameters, such as $s_{\phi}, p_{\phi}, \bar b_{\phi}$, may implicitly depend on the dimensions as well. For some special systems, such as bilinear systems, it has been shown that these constants are independent of the dimensions~\citep{sattar2022finite}. It is left as future work to explore other nonlinear systems' implicit dimension dependence.

Next, we can  generalize the i.i.d. inputs $u_t$ to include  control policies, i.e., $u_t=\pi(x_t)+\eta_t$, where $\eta_t$ satisfies Assumption \ref{ass: bounded iid semi continuous inputs} and $\pi(x_t)$ is analytic.

\begin{corollary}[{LSE's convergence rate for closed-loop systems}]\label{cor:lse-con-c-l}
Consider inputs $u_t=\pi(x_t)+\eta_t$, where {$\pi(\cdot)$ is real-analytic}, $\eta_t$ satisfies Assumption \ref{ass: bounded iid semi continuous inputs}, and the closed-loop system $x_{t+1}=\theta_*\phi(x_t, \pi(x_t)+\eta_t)+w_t$ satisfies Assumption \ref{ass: iss system} for both $w_t$ and $\eta_t$. Then, the same convergence rate in Theorem \ref{th:lse-con-o-l}  holds.
\end{corollary}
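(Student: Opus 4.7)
The plan is to reduce the closed-loop setting to the open-loop setting already handled by Theorem~\ref{th:lse-con-o-l}, by absorbing the policy $\pi$ into the feature map. Concretely, I would define the composed feature map
\begin{equation*}
\tilde{\phi}(x,\eta) := \phi\bigl(x,\,\pi(x)+\eta\bigr),
\end{equation*}
and rewrite the closed-loop dynamics as $x_{t+1}=\theta_{*}\,\tilde{\phi}(x_{t},\eta_{t})+w_{t}$. This is exactly the open-loop form treated in Theorem~\ref{th:lse-con-o-l}, with $\eta_{t}$ playing the role of the i.i.d.\ input. Moreover, the LSE objective is invariant under this rewriting, since the $t$-th residual $x_{t+1}-\hat{\theta}\,\phi(x_{t},u_{t})$ coincides identically with $x_{t+1}-\hat{\theta}\,\tilde{\phi}(x_{t},\eta_{t})$, so the LSE estimator computed in the closed-loop setting equals the LSE estimator of the rewritten open-loop system.

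Next I would verify that $\tilde{\phi}$ satisfies all the hypotheses required to reapply Theorem~\ref{th:BMSB-o-l} and Theorem~\ref{th:lse-con-o-l}. Analyticity of $\tilde{\phi}$ (Assumption~\ref{ass: analytic}) follows from the standard fact that compositions of real-analytic maps are real-analytic, together with the assumption that both $\phi$ and $\pi$ are real-analytic. For linear independence of the components of $\tilde{\phi}$, suppose $\sum_{i=1}^{n_{\phi}} c_{i}\tilde{\phi}^{i}(x,\eta)=0$ for all $(x,\eta)$. For any prescribed $(x,u)\in\mathbb{R}^{n_{x}+n_{u}}$, choosing $\eta=u-\pi(x)$ yields $\sum_{i}c_{i}\phi^{i}(x,u)=0$, and the linear independence of $\phi^{i}$ forces $c_{i}=0$. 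The assumptions on $w_{t}$ (Assumption~\ref{ass: bounded iid semi continuous noises}) are unchanged, the assumptions on $\eta_{t}$ (Assumption~\ref{ass: bounded iid semi continuous inputs}) are imposed by the corollary, and the LISS hypothesis (Assumption~\ref{ass: iss system}) for the closed-loop map $(w_{t},\eta_{t})\mapsto x_{t+1}$ is exactly what the corollary grants.

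With these assumptions verified for the rewritten system, I would invoke Theorem~\ref{th:BMSB-o-l} to obtain a $\bigl(1,\tilde{s}_{\phi}^{2}I_{n_{\phi}},\tilde{p}_{\phi}\bigr)$-BMSB condition for the $\{\mathcal{F}_{t}\}$-adapted process $\{\tilde{\phi}(x_{t},\eta_{t})\}_{t\geq 1}=\{\phi(x_{t},u_{t})\}_{t\geq 1}$, where the filtration is $\mathcal{F}_{t}=\mathcal{F}(w_{0},\ldots,w_{t-1},x_{0},\ldots,x_{t},\eta_{0},\ldots,\eta_{t})$ (note that $\pi(x_{t})$ is $\mathcal{F}_{t}$-measurable, so this filtration is the natural one). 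Feeding the BMSB conclusion into the proof of Theorem~\ref{th:lse-con-o-l} (which itself reduces to Theorem~2.4 of \citep{simchowitz2018learning}) then produces the same non-asymptotic rate, now with constants $\tilde{s}_{\phi},\tilde{p}_{\phi},\tilde{b}_{\phi}$ attached to the composed feature map.

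The plan involves essentially no new analytic content beyond Theorem~\ref{th:BMSB-o-l} and Theorem~\ref{th:lse-con-o-l}; the main obstacle is the bookkeeping required to confirm that the hypotheses transfer cleanly through the substitution $u=\pi(x)+\eta$. The linear-independence check and the LISS transfer are the only nontrivial points, and both are straightforward given the real-analyticity of $\pi$ and the closed-loop LISS hypothesis imposed in the corollary's statement.
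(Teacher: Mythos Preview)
Your proposal is correct. The route differs from the paper's in packaging rather than in mathematical substance, and your version is arguably cleaner.

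The paper does \emph{not} absorb $\pi$ into the feature map. Instead it states and proves a separate closed-loop BMSB lemma (Lemma~\ref{lem:BMSB_c_l}): it re-runs the Paley--Zygmund/measure-zero argument of Theorem~\ref{th:BMSB-o-l} directly on the process $\phi\bigl(h(z,w),\,\pi(h(z,w))+\eta\bigr)$, and in the middle of that argument verifies that the components remain linearly independent ``due to the additive nature of $\eta$.'' It then feeds this lemma, together with the LSE meta-theorem, into the same template as Theorem~\ref{th:lse-con-o-l}. Your reduction $\tilde{\phi}(x,\eta)=\phi(x,\pi(x)+\eta)$ accomplishes exactly the same thing but pushes the linear-independence check and the analyticity check \emph{outside} the BMSB argument, allowing Theorem~\ref{th:BMSB-o-l} and Theorem~\ref{th:lse-con-o-l} to be invoked as black boxes. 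The underlying computation is identical (both unwind to the same conditional expectation over $(w,\eta)$), but your approach avoids duplicating the BMSB proof and makes explicit that the closed-loop case is an instance of the open-loop one after a change of variables. The paper's approach, by contrast, keeps the original feature map $\phi$ visible throughout, which may be preferable if one later wants BMSB constants expressed in terms of $\phi$ and $\pi$ separately rather than the composite $\tilde{\phi}$.
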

The proof is provided in Appendix~\ref{app:B2}.

\subsection{Non-asymptotic Diameter Bounds for SME}
Set membership estimation (SME) is another popular method for uncertainty quantification in control system estimation \citep{lu2023robust,bertsekas1971control,li2024icml}. Unlike LSE, SME is a set-estimator and directly estimates the uncertainty set.  Since the analysis of SME also relies on the probabilistic persistent excitation analysis, we can also establish the convergence rate of SME for linearly parameterized nonlinear systems under i.i.d. noises in the following. In particular, SME estimates the uncertainty set as
\begin{align}\label{eq:SM-eq}
    \Theta_{T} = \bigcap \limits_{t=0}^{T-1} \bigg\{\hat{\theta}: x_{t+1} - \hat{\theta} \phi(x_{t}, u_{t}) \in \mathcal{W} \bigg\},
\end{align}
where $\mathcal{W}$ is a bounded set such that $w_{t} \in \mathcal{W}$ for all $t \geq 0$.

The convergence of SME relies on an additional assumption as shown below: the tightness of the bound $\mathcal W$ on $w_t$'s support. This tightness assumption is commonly considered in SME's literature \citep{li2024icml,lu2019robust,akccay2004size}. Further, \citep{li2024icml}  discusses the relaxation of this assumption by learning a tight bound at the same time of learning the uncertainty set of $\theta_*$ for linear systems. Similar tricks can be applied to nonlinear systems, but this paper only considers the vanilla case of SME for simplicity.

\begin{assumption}[Tight bound on disturbances]\label{ass:w-tight}
Assume for any $\epsilon > 0$, there exists $q_{w}(\epsilon) > 0$, such that for any $1 \leq j \leq n$ and $t \geq 0$, we have $\mathbb{P}(w_{t}^{j} + w_{\max} \leq \epsilon) \geq q_{w}(\epsilon) > 0$, $\mathbb{P}(w_{\max} - w_{t}^{j} \leq \epsilon) \geq q_{w}(\epsilon) > 0$.    
\end{assumption}
Assumption~\ref{ass:w-tight} requires that $w_t$ can visit set $\mathcal W$'s boundary arbitrarily closely with a positive probability. For example, for a one-dimensional $w_{t}$ bounded by $-w_{\max} \leq w_{t} \leq w_{\max}$, Assumption~\ref{ass:w-tight} requires that there is a positive probability that $w_{t}$ is close to $w_{\max}$ and $-w_{\max}$, i.e., for any $\epsilon > 0$, we have $\mathbb{P}( w_{\max} -\epsilon \leq w_{t} \leq w_{\max}) > 0$ and $\mathbb{P}( -w_{\max} \leq w_{t} \leq - w_{\max} + \epsilon) > 0$.

Next, we state a non-asymptotic bound on the diameter of the uncertainty set estimated by SME.

\begin{theorem}[SME's diameter bound for open-loop systems]\label{th:sme-con-o-l}
Consider  system (\ref{eq:sys}) with i.i.d.  inputs $u_t=\eta_t$. Suppose  Assumptions~\ref{ass: analytic}, \ref{ass: bounded iid semi continuous noises}, \ref{ass: bounded iid semi continuous inputs}, \ref{ass: iss system} are satisfied. Consider $s_{\phi}$ and $p_{\phi}$  defined in Theorem~\ref{th:BMSB-o-l} and let $b_{\phi} = \sup_{t \geq 0} \|\phi(z_{t})\|_{2}$.  For any $m \geq 0$ and  $\delta \in (0, 1)$, when $T > m$, we have
\begin{align*}
    \mathbb{P}\bigg(\textup{diam}(\Theta_{T}) > \delta\bigg) \leq \frac{T}{m} \tilde{O}\big(n_{\phi}^{2.5}\big)a_{2}^{n_{\phi}}\exp(-a_{3}m)\ +\  \tilde{O}\big(n_{x}^{2.5}n_{\phi}^{2.5}\big)a_{4}^{n_{x}n_{\phi}}\bigg(1-q_{w} \bigg(\frac{a_{1}\delta}{4\sqrt{n_{x}}}\bigg)\bigg)^{\frac{T}{m}},
\end{align*}
where $a_{1} = \frac{s_{\phi}p_{\phi}}{4}$, $a_{2} = \frac{64b_{\phi}^{2}}{s_{\phi}^{2}p_{\phi}^{2}}$, $a_{3} = \frac{p_{\phi}^{2}}{8}$, 
$a_{4} = \frac{16b_{\phi}\sqrt{n_{x}}}{s_{\phi}p_{\phi}}$. 
The constants hidden in $\tilde{O}$ are provided in the Appendix~\ref{app:C1}.
\end{theorem}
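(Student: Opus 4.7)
The plan is to split the event $\{\textup{diam}(\Theta_T)>\delta\}$ into an excitation-failure event and a boundary-failure event, matching the two summands of the bound. Since $w_t\in\mathcal W$ implies $\theta_*\in\Theta_T$, the triangle inequality gives $\textup{diam}(\Theta_T)\le 2\sup_{\hat\theta\in\Theta_T}\|\hat\theta-\theta_*\|_F$, so it suffices to rule out any $\hat\theta=\theta_*+\Delta$ with $\|\Delta\|_F\ge \delta/2$ remaining in $\Theta_T$. The constraint $\hat\theta\in\Theta_T$ is equivalent to $\|w_t-\Delta\phi(x_t,u_t)\|_\infty\le w_{\max}$ for every $t$, so expelling $\hat\theta$ amounts to finding a time $t$ and coordinate $j$ with $w_t^j$ close to $\pm w_{\max}$ on the boundary whose sign matches that of $\Delta_j^\intercal\phi(x_t,u_t)$. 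I would then partition $\{0,\ldots,T-1\}$ into $\lfloor T/m\rfloor$ blocks of length $m$ to exploit block-wise (conditional) independence.

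To obtain the first summand I would establish, for every block, a uniform excitation event: for every unit $v\in\mathbb R^{n_\phi}$ there exists some $t$ in the block with $|v^\intercal\phi(x_t,u_t)|\ge s_\phi p_\phi/4$. For fixed $v$, Theorem~\ref{th:BMSB-o-l} supplies $\mathbb P(|v^\intercal\phi(x_t,u_t)|\ge s_\phi\mid\mathcal F_{t-1})\ge p_\phi$, and a Chernoff bound for adapted Bernoulli indicators (in the spirit of Proposition 2.5 of \cite{simchowitz2018learning}) converts this into an $\exp(-\Theta(p_\phi^2 m))$ tail that no hit occurs in the block, producing the constant $a_3=p_\phi^2/8$. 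Because LISS plus bounded $w_t,u_t$ keep the states bounded and $\phi$ is analytic (hence bounded on this set) by $b_\phi$, the map $v\mapsto v^\intercal\phi$ is $b_\phi$-Lipschitz; an $\varepsilon$-net of the unit sphere in $\mathbb R^{n_\phi}$ of cardinality $\tilde O(n_\phi^{2.5})\,a_2^{n_\phi}$ therefore lifts the single-$v$ bound to all $v$ simultaneously. A union bound over the $\lfloor T/m\rfloor$ blocks yields the first summand exactly.

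Conditional on this uniform excitation event, I fix any $\Delta$ with $\|\Delta\|_F\ge\delta/2$ and pick the row $j^\star$ with $\|\Delta_{j^\star}\|_2\ge\delta/(2\sqrt{n_x})$; in each block there is a $t^\star$ with $|\Delta_{j^\star}^\intercal\phi(x_{t^\star},u_{t^\star})|\ge a_1\delta/\sqrt{n_x}$, where $a_1=s_\phi p_\phi/4$. Because $w_{t^\star}$ is independent of $\mathcal F_{t^\star}$ (and hence of both $\phi(x_{t^\star},u_{t^\star})$ and the sign of $\Delta_{j^\star}^\intercal\phi(x_{t^\star},u_{t^\star})$), Assumption~\ref{ass:w-tight} ensures that with probability at least $q_w\!\bigl(a_1\delta/(4\sqrt{n_x})\bigr)$ the coordinate $w_{t^\star}^{j^\star}$ lies within $a_1\delta/(4\sqrt{n_x})$ of the matching boundary, which forces $|w_{t^\star}^{j^\star}-\Delta_{j^\star}^\intercal\phi(x_{t^\star},u_{t^\star})|>w_{\max}$ and therefore $\hat\theta\notin\Theta_T$. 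Applying this block-by-block against the filtration at the end of each block bounds the probability that the boundary event fails in every block by $\bigl(1-q_w(a_1\delta/(4\sqrt{n_x}))\bigr)^{T/m}$. I would then discretize the set of admissible $\Delta$ by an $\varepsilon$-net in Frobenius norm with cardinality $\tilde O(n_x^{2.5}n_\phi^{2.5})\,a_4^{n_x n_\phi}$, choosing the net resolution so that the perturbation in $\Delta\phi$ stays below $a_1\delta/(4\sqrt{n_x})$ (which produces $a_4=16 b_\phi\sqrt{n_x}/(s_\phi p_\phi)$); a union bound over the net delivers the second summand.

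I expect the main obstacle to be the block-wise uniform excitation step, where a single-step BMSB bound must be lifted to an ``at-least-one-hit in the block'' statement simultaneously over every direction $v$; this requires combining the adapted-Bernoulli concentration with a carefully tuned Lipschitz cover of the sphere, and is further complicated by the need to align the sign of the excitation with the correct boundary of $w_{t^\star}^{j^\star}$. The sign issue is resolved by exploiting that $w_{t^\star}$ is independent of $\mathcal F_{t^\star}$, so either boundary can be attained with probability $\ge q_w(\cdot)$; the rest amounts to careful bookkeeping of constants to match $a_1,a_2,a_3,a_4$ in the stated bound.
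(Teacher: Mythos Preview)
Your proposal is correct in substance, but it takes a markedly different route from the paper. The paper's proof is essentially one line: it invokes an existing SME meta-theorem (Theorem~1 of \cite{li2024icml}, restated in the appendix as Theorem~\ref{lem:sm-meta}) that gives exactly the stated diameter bound for any linear-response time series $x_t=\theta_* y_t+w_t$ under four hypotheses, and then simply verifies those hypotheses for $y_t=\phi(z_t)$ using Theorem~\ref{th:BMSB-o-l} (BMSB), Assumptions~\ref{ass: bounded iid semi continuous noises} and~\ref{ass:w-tight} (noise properties), and LISS (boundedness $\|\phi(z_t)\|_2\le b_\phi$). No decomposition, no covering, no Chernoff bound appears in the paper's own argument.

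What you have written is, in effect, a from-scratch sketch of the proof of that very meta-theorem: the two-term decomposition into ``excitation fails in some block'' and ``boundary-hit fails in every block'', the BMSB-to-Chernoff conversion producing $a_3=p_\phi^2/8$, the Lipschitz $\varepsilon$-net over the sphere $\mathcal V$ giving the $a_2^{n_\phi}$ factor, and the second net over $\Delta$ giving $a_4^{n_x n_\phi}$. This is precisely the machinery developed in \cite{li2024icml}, so your argument is sound and recovers the same constants. The trade-off is clear: the paper's approach is concise and modular (all the hard work is outsourced), whereas yours is self-contained and makes explicit where each constant comes from, at the cost of reproducing several pages of an external reference. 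If you intend to submit this as a proof of Theorem~\ref{th:sme-con-o-l} specifically, citing the meta-theorem is both shorter and cleaner; your detailed sketch would be more appropriate as an exposition of \cite{li2024icml} itself.
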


The proof of Theorem \ref{th:sme-con-o-l} relies on Theorem~\ref{th:BMSB-o-l} in this paper and Theorem 1 from~\citep{li2024icml}. The detailed proof is provided in Appendix~\ref{app:C1}. 

Theorem \ref{th:sme-con-o-l} establishes an upper bound on the "failure" probability of SME, i.e., the probability that the uncertainty set's diameter exceeds $\delta$. To ensure the failure probability is less than $1$, one can select $m=O(\log(T))$ and choose a sufficiently large $T$ such that $T \geq m=O(\log(T))$. If $w_{t}$ is more likely to visit the boundaries of the set $\mathcal{W}$ (meaning a larger $q(\ell)$), SME is less likely to estimate an uncertainty set with a diameter greater than $\delta$. 

To provide more intuitions on the diameter bound in Theorem~\ref{th:sme-con-o-l}, we consider $q_{w}(\ell) = c_{w} \ell$ for some $c_{w} > 0$. Note that several common distributions, including the uniform distribution and the truncated Gaussian distribution, satisfy this property on $q_{w}(\ell)$ (see Appendix~\ref{app:C2} for explicit formulas of $c_w$). With $q_{w}(\ell) = c_{w} \ell$, we can provide a convergence rate of SME in terms of $T$ in the following.

{\begin{corollary}[SME's convergence rate when $q_{w}(\ell) = c_{w} \ell$
]\label{cor:sme-rate}
For any $\epsilon > 0$, let 
$$m \geq O\Bigg( \dfrac{\log\big(\frac{T}{\epsilon}\big) + n_{\phi} \log\big(\frac{8 b_{\phi}}{s_{\phi}p_{\phi}}\big)}{p_{\phi}^{2}} \Bigg).$$ 
If $w_{t}$'s distribution satisfies $q_{w}(\ell) = c_{w} \ell$ for all $\ell > 0$, then with probability at least $1-2\epsilon$, we have:
\begin{align*}
    \emph{diam}(\Theta_{T}) \leq 
    O\Bigg( \dfrac{\sqrt{n_{x}} \log\big(\frac{1}{\epsilon}\big) + n_{x}^{1.5}n_{\phi}\log\big(\frac{b_{\phi}n_{x}}{s_{\phi}p_{\phi}}\big)}{c_{w}s_{\phi}p_{\phi}T} \Bigg),
\end{align*}
where the constants hidden in $O(\cdot)$ are provided in  Appendix C.
\end{corollary}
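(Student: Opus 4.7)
The plan is to take the two-term failure-probability bound from Theorem~\ref{th:sme-con-o-l} and balance the two terms against $\epsilon$ by choosing $m$ and $\delta$ appropriately, then take a union bound. First I would substitute the assumed form $q_w(\ell) = c_w \ell$ into the second term of Theorem~\ref{th:sme-con-o-l} to obtain
\begin{align*}
\mathbb{P}\big(\mathrm{diam}(\Theta_T) > \delta\big) \;\leq\; \underbrace{\frac{T}{m}\,\tilde{O}(n_\phi^{2.5})\,a_2^{n_\phi}\,e^{-a_3 m}}_{(\mathrm{I})} \;+\; \underbrace{\tilde{O}(n_x^{2.5} n_\phi^{2.5})\,a_4^{n_x n_\phi}\Big(1 - \tfrac{c_w a_1 \delta}{4\sqrt{n_x}}\Big)^{T/m}}_{(\mathrm{II})},
\end{align*}
and use the standard inequality $(1-x)^{T/m} \leq \exp(-xT/m)$ on term (II).

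Next I would pick $m$ so that (I) $\leq \epsilon$. Taking logarithms and using $a_2 = (8 b_\phi/(s_\phi p_\phi))^2$ and $a_3 = p_\phi^2/8$, this amounts to the requirement
\begin{align*}
m \;\gtrsim\; \frac{1}{p_\phi^{2}}\Big(\log(T/\epsilon) + n_\phi \log\big(\tfrac{8 b_\phi}{s_\phi p_\phi}\big) + \log\tilde{O}(n_\phi^{2.5})\Big),
\end{align*}
which, after absorbing the $\log n_\phi$ factor, is exactly the hypothesis on $m$ in the statement. Then I would choose $\delta$ so that (II) $\leq \epsilon$. After taking logarithms and using $a_1 = s_\phi p_\phi/4$ together with $a_4 = 16 b_\phi \sqrt{n_x}/(s_\phi p_\phi)$, this becomes
\begin{align*}
\delta \;\geq\; \frac{16 \sqrt{n_x}\, m}{c_w s_\phi p_\phi T}\Big(\log(1/\epsilon) + n_x n_\phi \log\big(\tfrac{16 b_\phi \sqrt{n_x}}{s_\phi p_\phi}\big) + \log\tilde{O}(n_x^{2.5} n_\phi^{2.5})\Big).
\end{align*}
Substituting the chosen $m$ into this inequality and absorbing the sub-dominant logarithmic contributions ($\log n_x$, $\log n_\phi$, and the polylog-in-$T$ factor from $m$) into the $O(\cdot)$ constant yields exactly the claimed bound
$\mathrm{diam}(\Theta_T) \leq O\big(\tfrac{\sqrt{n_x}\log(1/\epsilon) + n_x^{1.5} n_\phi \log(b_\phi n_x/(s_\phi p_\phi))}{c_w s_\phi p_\phi T}\big)$. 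A final union bound over the two failure events gives the overall probability $1-2\epsilon$.

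The main obstacle here is bookkeeping rather than a technical barrier: the bound has to be simultaneously monotone in $m$ (increasing $m$ helps (I) but hurts (II) through the $T/m$ exponent and the prefactor $\sqrt{n_x} m$), so one must verify that the choice of $m$ stated in the corollary indeed leaves room to drive (II) below $\epsilon$ with a $\delta$ of the advertised order. The secondary care point is tracking where the $\sqrt{n_x}$ and $n_x^{1.5}$ factors enter --- the $\sqrt{n_x}$ comes from the $4\sqrt{n_x}$ divisor inside $q_w(\cdot)$, while the extra $n_x$ factor in the $\log$-free term originates from the $n_x n_\phi \log a_4$ appearing in the covering argument already packaged inside Theorem~\ref{th:sme-con-o-l}. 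Beyond these considerations the derivation is a mechanical balancing of the two exponentials.
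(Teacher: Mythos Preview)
Your proposal is correct and follows essentially the same approach as the paper's proof: both choose $m$ to drive term (I) below $\epsilon$, then solve for the smallest $\delta$ that drives term (II) below $\epsilon$, and finish with a union bound. The only cosmetic difference is that the paper sets term (II) equal to $\epsilon$ and uses $1-y\leq -\log y$ to isolate $q_w(\cdot)$, whereas you bound term (II) via $(1-x)^{T/m}\leq e^{-xT/m}$; these are algebraically equivalent manipulations leading to the same inequality on $\delta$. Your remark about the residual $m$-dependence (the hidden $\log T$ factor) is apt and is exactly what the paper sweeps into the $O(\cdot)$ in its last display.
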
}
The proof of Corollary~\ref{cor:sme-rate} is provided in Appendix~\ref{app:C2}.

Finally, similar to LSE, we can extend  SME's convergence rates from open-loop systems to closed-loop systems with real-analytic control policies, i.e., $u_t=\pi(x_t)+\eta_t$, where $\eta_t$ satisfies Assumption \ref{ass: bounded iid semi continuous inputs} and $\pi(x_t)$ is real-analytic.

\begin{corollary}[SME's convergence rate for closed-loop systems]\label{cor:sme-con-c-l}
Consider inputs $u_t=\pi(x_t)+\eta_t$, where  $\pi(\cdot)$ is real-analytic, $\eta_t$ satisfies Assumption \ref{ass: bounded iid semi continuous inputs}, and the closed-loop system $x_{t+1}=\theta_*\phi(x_t, \pi(x_t)+\eta_t)+w_t$ satisfies Assumption \ref{ass: iss system} in terms of both $w_t$ and $\eta_t$. Then, the same convergence rates in Theorem~\ref{th:sme-con-o-l} and Corollary~\ref{cor:sme-rate} still hold.
\end{corollary}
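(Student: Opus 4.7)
The plan is to reduce the closed-loop setting to the open-loop setting already handled by Theorem~\ref{th:sme-con-o-l} and Corollary~\ref{cor:sme-rate}, exactly mirroring the reduction used for LSE in Corollary~\ref{cor:lse-con-c-l}. Specifically, I would define the effective feature map $\tilde{\phi}(x_{t}, \eta_{t}) := \phi\big(x_{t}, \pi(x_{t}) + \eta_{t}\big)$ so that the closed-loop dynamics
\begin{align*}
    x_{t+1} = \theta_{*} \phi\big(x_{t}, \pi(x_{t}) + \eta_{t}\big) + w_{t} = \theta_{*} \tilde{\phi}(x_{t}, \eta_{t}) + w_{t}
\end{align*}
take the same form as the open-loop system~\eqref{eq:sys} but with feature vector $\tilde{\phi}$ and i.i.d. input $\eta_{t}$. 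Crucially, the SME uncertainty set $\Theta_{T}$ in~\eqref{eq:SM-eq} is unchanged under this relabeling, since we have $\phi(x_{t}, u_{t}) = \tilde{\phi}(x_{t}, \eta_{t})$ almost surely along every trajectory.

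Next, I would verify that the four assumptions of Theorem~\ref{th:sme-con-o-l} transfer to the reduced system with feature $\tilde{\phi}$. Assumption~\ref{ass: analytic} holds because the composition of real-analytic functions is real-analytic, so each $\tilde{\phi}^{i}$ is real-analytic on $\mathbb{R}^{n_{x}+n_{u}}$. Assumption~\ref{ass: bounded iid semi continuous noises} on $w_{t}$ is unchanged, and Assumption~\ref{ass: bounded iid semi continuous inputs} is satisfied by $\eta_{t}$ by hypothesis. Assumption~\ref{ass: iss system} is imposed directly on the closed-loop system by hypothesis. The linear independence of $\tilde{\phi}^{1}, \ldots, \tilde{\phi}^{n_{\phi}}$ follows from a simple change of variables: if $\sum_{i} c_{i} \tilde{\phi}^{i}(x, \eta) = 0$ for all $(x,\eta)$, then fixing $x$ and substituting $u = \pi(x) + \eta$ (which sweeps out all of $\mathbb{R}^{n_{u}}$ as $\eta$ varies) yields $\sum_{i} c_{i} \phi^{i}(x, u) = 0$ for all $(x,u)$, forcing $c_{i} = 0$ by the standing linear-independence assumption on $\phi$.

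Once the reduction is in place, Theorem~\ref{th:BMSB-o-l} applies to $\{\tilde{\phi}(x_{t}, \eta_{t})\}_{t \geq 1}$ with respect to the filtration $\mathcal{F}_{t} = \mathcal{F}(w_{0:t-1}, x_{0:t}, \eta_{0:t})$, yielding constants $\tilde{s}_{\phi} > 0$ and $\tilde{p}_{\phi} \in (0, 1)$ for which the $(1, \tilde{s}_{\phi}^{2} I_{n_{\phi}}, \tilde{p}_{\phi})$-BMSB condition holds. Since $\eta_{t}$ is i.i.d. and independent of $\mathcal{F}_{t-1}$, and since boundedness of the trajectory (via Assumption~\ref{ass: iss system}) gives $\tilde{b}_{\phi} := \sup_{t} \|\tilde{\phi}(x_{t}, \eta_{t})\|_{2} < \infty$, the hypotheses of Theorem~\ref{th:sme-con-o-l} are all met for the reduced system. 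Invoking that theorem verbatim produces the diameter bound, and the derivation of Corollary~\ref{cor:sme-rate} under $q_{w}(\ell) = c_{w} \ell$ carries through unchanged with $(\tilde{s}_{\phi}, \tilde{p}_{\phi}, \tilde{b}_{\phi})$ in place of $(s_{\phi}, p_{\phi}, b_{\phi})$.

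The main obstacle, and the reason the corollary is nontrivial, is precisely the verification that the composed feature $\tilde{\phi}$ retains analyticity and linear independence; everything else is bookkeeping. The constants $(\tilde{s}_{\phi}, \tilde{p}_{\phi}, \tilde{b}_{\phi})$ are of course different from their open-loop counterparts because they now depend on $\pi$ and on the closed-loop state distribution, but their mere \emph{existence} is what Theorem~\ref{th:BMSB-o-l} guarantees, and this existence is all that is required for the rates in Theorem~\ref{th:sme-con-o-l} and Corollary~\ref{cor:sme-rate} to hold with identical functional form.
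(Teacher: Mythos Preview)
Your proposal is correct and matches the paper's approach: the paper invokes Lemma~\ref{lem:BMSB_c_l} (the closed-loop BMSB result established in the proof of Corollary~\ref{cor:lse-con-c-l}) and then reruns the proofs of Theorem~\ref{th:sme-con-o-l} and Corollary~\ref{cor:sme-rate} verbatim, which is exactly the reduction you describe. Your framing via the effective feature map $\tilde{\phi}(x,\eta)=\phi(x,\pi(x)+\eta)$ and a direct appeal to Theorem~\ref{th:BMSB-o-l} is a slightly cleaner packaging of the same argument---in particular, your change-of-variables verification of linear independence is more explicit than the paper's justification in Lemma~\ref{lem:BMSB_c_l}---but the mathematical content is identical.
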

The proof of Corollary \ref{cor:sme-con-c-l} is provided in Appendix~\ref{app:C3}.

\section{Numerical Experiments}\label{sec:num-exp}

In this section, we evaluate the performance of LSE in estimating the unknown parameter $\theta_{*}$ and SME in estimating the uncertainty set for the unknown parameters using the pendulum and quadrotor examples outlined in Section~\ref{sec:prob}. We compare the empirical convergence rates of LSE and SME with the theoretical rates in Theorem~\ref{th:lse-con-o-l} and Corollary~\ref{cor:sme-rate}. In each case, the input $u_{t}$ is composed of a control policy and i.i.d. noise, such that $u_{t} = \pi(x_{t}) + \eta_{t}$. For our experiments, we employ noise and disturbances drawn from uniform and truncated-Gaussian distributions. To compute theoretical rates, we numerically estimate parameters such as $s_{\phi}$ and $p_{\phi}$ (see Appendix~\ref{appendix:E}). Further details can be found in our source code.\footnote{\footnotesize \url{https://github.com/NeginMusavi/real-analytic-nonlinear-sys-id}} 

The details for these scenarios are outlined below: 
\begin{itemize}
    \item \textbf{Pendulum~\ref{example: pendulum}:} In the pendulum example described in Section~\ref{sec:prob}, the control input is $u_{t} = - k \dot{\alpha}_{t} + \eta_{t}$. This scenario includes two unknown parameters: $\theta_{1} = \dfrac{1}{l}$ and $\theta_{2} = \dfrac{1}{ml^{2}}$.
    \item \textbf{Quadrotor~\ref{example: quadrotor}:} For the quadrotor example in Section~\ref{sec:prob}, the control input is defined as $u_t = \pi(x_t) + \eta_t$, where $\pi(x_t)$ follows the controller proposed by~\citet{alaimo2013mathematical}. The quadrotor system involves $13$ states and $4$ inputs, with the unknown parameter matrix $\theta_{*}$ containing $7$ parameters, including the mass $m$ and specific elements of the inertia matrix $I$.
\end{itemize}
Further details on controller gains and unknown parameters are provided in Appendix~\ref{appendix:D}.

\paragraph{LSE Results:}
Figures~\ref{fig:lse_pend_uni} and~\ref{fig:lse_pend_trunc} present a comparison between the LSE theoretical bound from Theorem~\ref{th:lse-con-o-l} with its empirical estimation error of the unknown parameters $\theta_{*}$ versus trajectory length $T$ for the pendulum example, with uniform and truncated-Gaussian noises and disturbances. Similarly, Figures~\ref{fig:lse_quad_uni} and~\ref{fig:lse_quad_trunc} show this comparison for the quadrotor example. In each figure, both the theoretical bound and empirical error are normalized by the $l_2$ norm of the nominal parameter $\theta_{*}$. The log-log plots for both scenarios demonstrate that the empirical error rate achieves $O(\frac{1}{\sqrt{T}})$ which in consistent with the theoretical rate in Theorem~\ref{th:lse-con-o-l}.
\begin{figure}[ht]
    \centering
    \subcaptionbox{Uniform\hspace*{-0.75cm} \label{fig:lse_pend_uni}}[0.24\textwidth]{      
        \includegraphics[width=\linewidth]{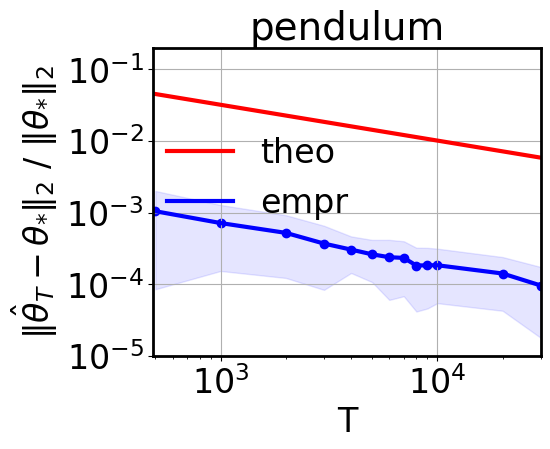}
    }
    \subcaptionbox{Truncated-Gaussian\hspace*{-0.4cm}\label{fig:lse_pend_trunc}}[0.24\textwidth]{   
        \includegraphics[width=\linewidth]{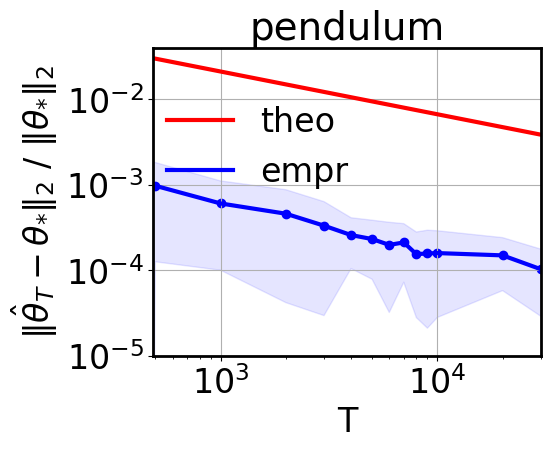}
    }
    \subcaptionbox{Uniform\hspace*{-0.9cm}\label{fig:lse_quad_uni}}[0.24\textwidth]{     
        \includegraphics[width=\linewidth]{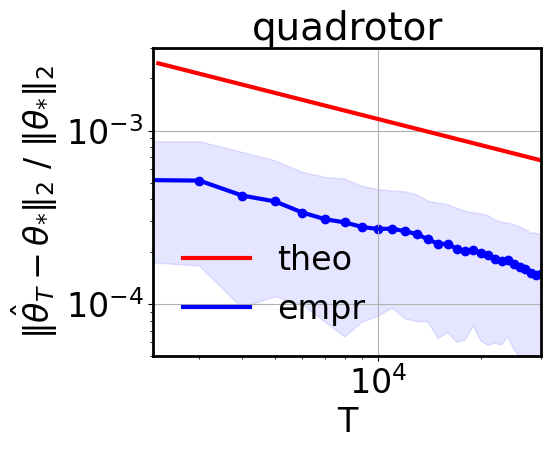}
    }
    \subcaptionbox{Truncated-Gaussian\hspace*{-0.75cm}\label{fig:lse_quad_trunc}}[0.24\textwidth]{
        \includegraphics[width=\linewidth]{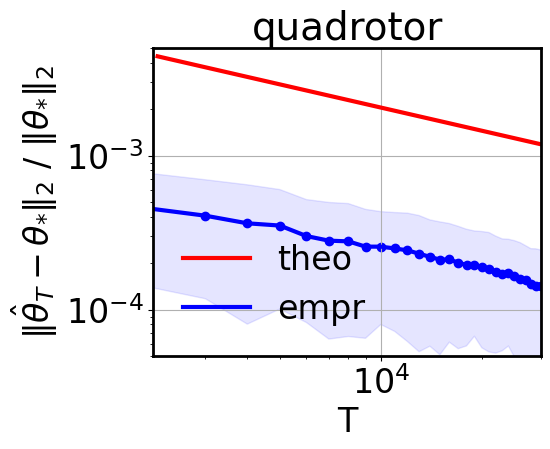}
    }
    \caption{\footnotesize Convergence rate of the LSE for pendulum and quadrotor scenarios: (a) Pendulum example with uniform, (b) Pendulum example with truncated-Gaussian, (c) Quadrotor example with uniform, and (d) Quadrotor example with truncated-Gaussian noises and disturbances.
    Here, uniform noises and disturbances are i.i.d. generated from $\texttt{uniform}([-1, 1])$, and truncated-Gaussian noises and disturbances are i.i.d. generated from $\texttt{truncated-Gaussian}(0, 0.1, [-1, 1])$. "theo" denotes the theoretical convergence rate, and "empr" represents the empirical rate. The mean error across 20 trials is shown by dots on the empirical plots, with shaded areas illustrating empirical standard deviation.}    
\end{figure}

\paragraph{SME Results:} Figures~\ref{fig:sme_pend_uni} and~\ref{fig:sme_pend_trunc} show the empirical convergence rate of SME for the pendulum example, for uniform and truncated-Gaussian noises and disturbances, in comparison to the theoretical rate from Corollary~\ref{cor:sme-rate}. Both the theoretical bound and empirical error in the figures are normalized by the $l_2$ norm of the nominal parameter $\theta_{*}$. The log-log plots indicate that the empirical rate achieves $O(\frac{1}{T})$, which is consistent with the results from Corollary~\ref{cor:sme-rate} and with the related results for linear systems in~\citep{li2024icml}. A similar result can be observed for the quadrotor example, in Figures~\ref{fig:sme_quad_uni} and~\ref{fig:sme_quad_trunc}. Additionally, Figure~\ref{fig:pend_sme} shows the uncertainty sets estimated by SME for the two unknown parameters, labeled \( \theta_{1} \) and \( \theta_{2} \), in the pendulum example, along with the diameters of these sets as trajectory length grows. We observe that these sets contract as trajectory length increases, with the true values of the unknown parameters lying within the estimated uncertainty sets. The illustration of uncertainty sets for the quadrotor example is provided in Appendix~\ref{app:quad}.
\begin{figure}[htp]
    \centering
    \subcaptionbox{Uniform\hspace*{-0.75cm} \label{fig:sme_pend_uni}}[0.24\textwidth]{     
        \includegraphics[width=\linewidth]{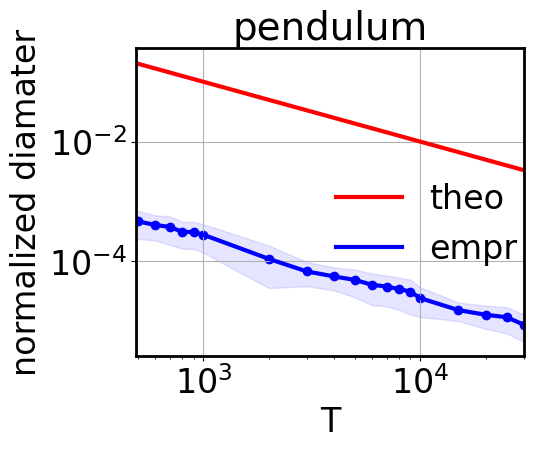}
    }
    \subcaptionbox{Truncated-Gaussian\hspace*{-0.4cm}\label{fig:sme_pend_trunc}}[0.24\textwidth]{  
        \includegraphics[width=\linewidth]{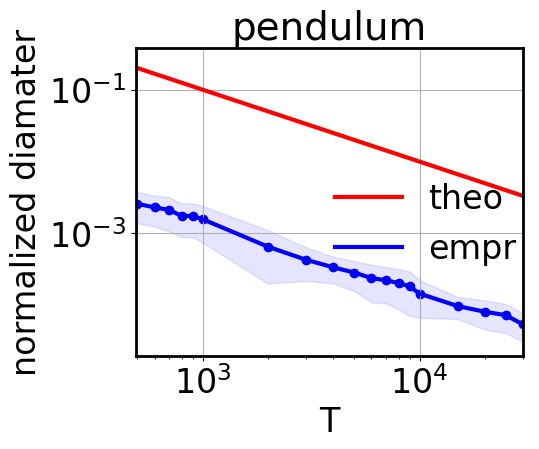}
    }
    \subcaptionbox{Uniform\hspace*{-0.9cm}\label{fig:sme_quad_uni}}[0.24\textwidth]{     
        \includegraphics[width=\linewidth]{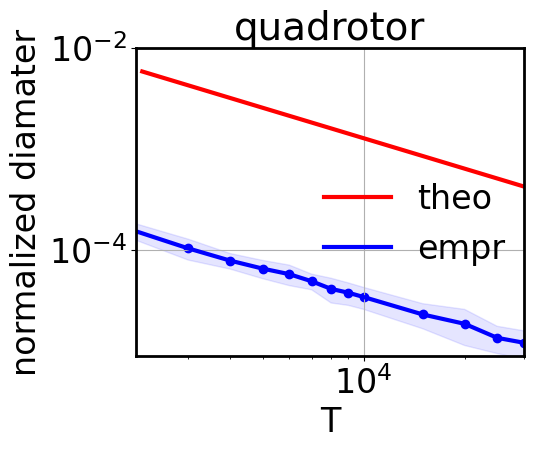}
    }
    \subcaptionbox{Truncated-Gaussian\hspace*{-0.75cm}\label{fig:sme_quad_trunc}}[0.24\textwidth]{  
        \includegraphics[width=\linewidth]{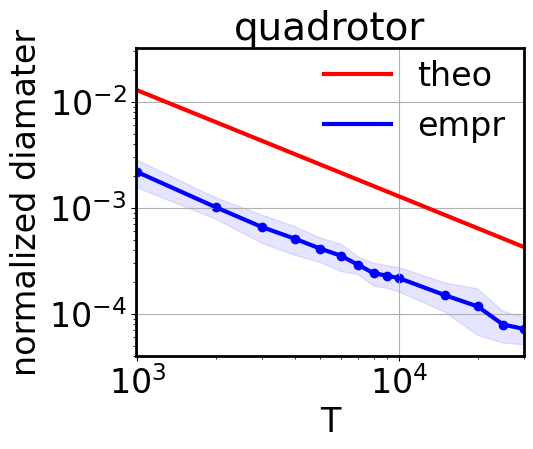}
    }
    \caption{\footnotesize Convergence rate of the SME for pendulum and quadrotor scenarios: (a) Pendulum example with uniform, (b) Pendulum example with truncated-Gaussian, (c) Quadrotor example with uniform, and (d) Quadrotor example with truncated-Gaussian noises and disturbances. Here, uniform noises and disturbances are i.i.d. generated from $\texttt{uniform}([-1, 1])$, and truncated-Gaussian noises and disturbances are i.i.d. generated from $\texttt{truncated-Gaussian}(0, 0.5, [-1, 1])$. "theo" denotes the theoretical convergence rate, and "empr" represents the empirical rate. The mean error across 10 trials is shown by dots on the empirical plots, with shaded areas illustrating empirical standard deviation.}
\end{figure}

\begin{figure}[ht]
    \centering
    \subcaptionbox{Pendulum\hspace{1.8cm}(b) Uncertainty set diameter\hspace{1.6cm}(c) Uncertainty set\hspace*{-0.2cm}}[0.9\textwidth]{      
        \includegraphics[width=\linewidth]{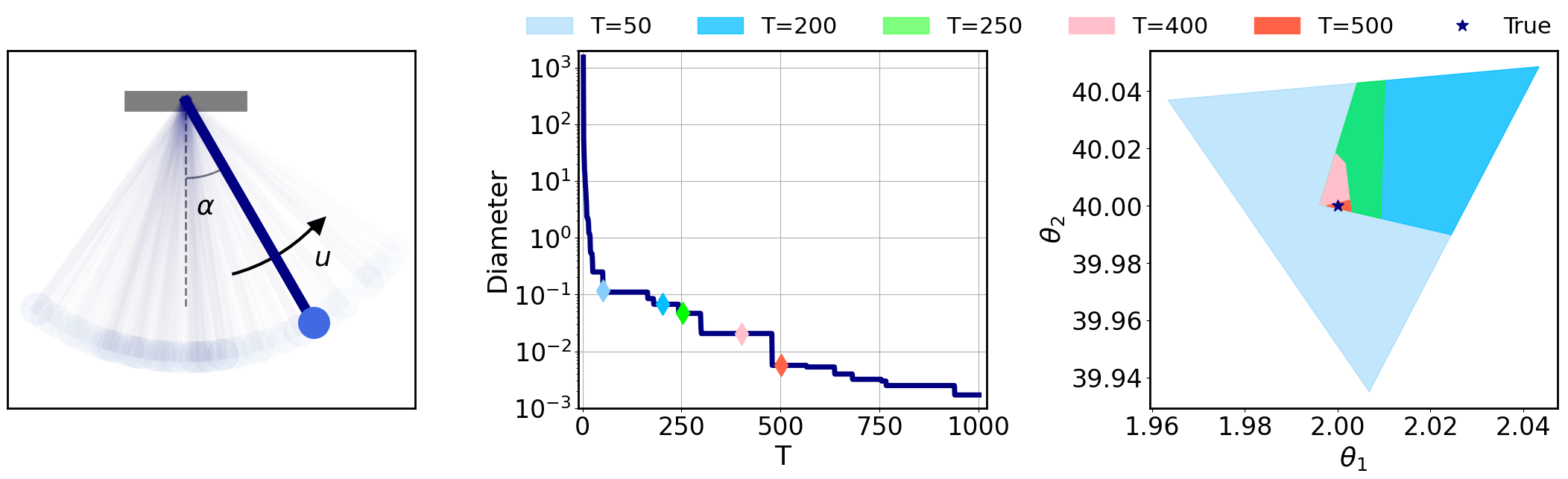}
    }
    \caption{
    \footnotesize Performance of SME for pendulum in (a) with control input $u_{t} = -k\dot{\alpha}_{t}+\eta_{t}$ where $k=0.1$, $\eta_{t}$ i.i.d. generated from $\texttt{truncated-Gaussian}(0, 2, [-2, 2])$ and disturbed with $w_{t}$ i.i.d. generated from $\texttt{truncated-Gaussian}(0, 1, [-1, 1])$. (b) Diameter of the uncertainty set estimated by SME. (c) Uncertainty set depicted for $T=50, 200, 250, 400, 500$.}
    \label{fig:pend_sme}
\end{figure}

\section{Concluding Remarks}\label{sec:con}

\textbf{Conclusion.} This study examines the probabilistic persistent excitation in a class of nonlinear systems influenced by i.i.d. noise and stochastic disturbances, with the stipulation that their distributions do not concentrate on sets of Lebesgue measure zero. Based on this we then present an explicit bound on the convergence rate of SME estimations and LSE estimations for this class of dynamical systems. Additionally, numerical experiments in the context of robotics are provided to illustrate both methods.

\textbf{Limitations.} One limitation of this work is that our analysis relies on a specific class of i.i.d. noises and stochastic disturbances, where the probability distribution is not concentrated on sets of Lebesgue measure zero. While this is a sufficient condition, it is possible that the BMSB conditions are satisfied under other circumstances. Another limitation is that, though we provide sufficient conditions for the existence of parameters satisfying the BMSB condition, the explicit dependence is not detailed here. Lastly, imperfect observations are not considered here. 

\textbf{Future Work.} {Our future work includes several promising directions, e.g., to explore cases that do not satisfy our semi-continuity assumption, such as discrete noises, and to investigate the explicit dependence of the BMSB parameter on system attributes, such as state, input, and feature dimensions, etc. Furthermore, extending this work to imperfect state observations is an important next step. Finally, a potential direction is to provide a non-asymptotic analysis of the volumes of uncertainty sets estimated by SME uncertainty sets, as opposed to the current focus on their diameters.}

\section*{Broader Impact}
This paper is a foundation research and develops theoretical insight to estimation of nonlinear control systems. We do not see a direct path to negative applications in general. But we want to mention that successful applications of our theoretical results rely on verifying the assumptions in this paper.

\bibliographystyle{plainnat.bst}
\bibliography{references}

\newpage
\begin{appendix}

\section*{\centering Appendix}
\section*{Roadmap}
\begin{itemize}
    \item Appendix A provides a proof of Theorem~\ref{th:BMSB-o-l}. 
    \item Appendix B provides proofs of Theorem~\ref{th:lse-con-o-l} and Corollary~\ref{cor:lse-con-c-l}. 
    \item Appendix C presents a proof of Theorem~\ref{th:sme-con-o-l} and Corollaries~\ref{cor:sme-rate} and~\ref{cor:sme-con-c-l}.
    \item Appendix D provides more details of the simulation settings. 
    \item Appendix E discusses the numerical estimation of the BMSB parameters $(s_{\phi}, p_{\phi})$ in Theorem~\ref{th:BMSB-o-l}.
    \item The NeurIPS Paper Checklist is provided after the appendices. 
\end{itemize}

\section{Proof Theorem~\ref{th:BMSB-o-l}}\label{appendix:A}

\begin{proof}
Given that $u_{t} = \eta_{t}$ and satisfies the conditions in Assumption~\ref{ass: bounded iid semi continuous inputs}, $u_{t}$ is bounded, meaning $u_{t} \in \mathcal{U}$, where $\mathcal{U}$ is a compact set. Moreover, since the system is LISS, there exist functions $\gamma \in \mathcal{K}$ and $\beta \in \mathcal{KL}$, such that for all $t \geq 0$, the following holds:
$$
x_{t} \in \mathcal{X} = \bigg\{x \in \mathbb{R}^{n} : \|x\|_{2} \leq \beta(\rho_{x}, 0) + \gamma(\rho)\bigg\}
$$
with parameters $\rho_x \geq \|x_0\|_{2}$ and $\rho \geq \max(w_{\max}, u_{\max})$. Let $\mathcal{Z} = \mathcal{X} \times \mathcal{U}$, then $z_{t} \in \mathcal{Z}$ for all $t \geq 0$. The set $\mathcal{Z}$ is a compact subset of $\mathbb{R}^{n_{x}+n_{u}}$.

To show that the $\{\mathcal{F}_{t}\}_{t \geq 1}$-adapted process $\{\phi(z_{t})\}_{t \geq 1}$ satisfies the BMSB condition, it is sufficient to demonstrate that there exist $s_{\phi} > 0$ and $p_{\phi} \in (0, 1)$ such that for all $t \geq 0$ and for any $v \in \mathbb{R}^{n_{\phi}}$ with $\|v\|_{2} = 1$, the following holds: 
\begin{align}\label{eq:eq:goal-bmsb}
    {\mathbb{P}}\bigg(|v^T\phi(z_{t+1})| \geq s_{\phi} \|v\|_{2}\ \big|\ {\mathcal{F}}_{t}\bigg) \geq p_{\phi}. 
\end{align}

To establish this, we apply the Paley-Zygmund inequality, which gives a lower bound on the tail probability of a non-negative random variable: \begin{lemma}\label{lem:p-z}(Paley-Zygmund~\citep{petrov2007lower}) 
Let $x$ be a non-negative random variable. Then for any $r \in (0, 1)$, the following holds: 
\begin{align*} 
    \mathbb{P}\bigg(x > r \sqrt{\mathbb{E}[x^{2}]} \bigg) \geq (1 - r^{2})^{2} \dfrac{\mathbb{E}[x^{2}]^{2}}{\mathbb{E}[x^{4}]}. \end{align*} 
\end{lemma}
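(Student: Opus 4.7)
The plan is to reduce the stated inequality to the classical Paley--Zygmund inequality by a substitution, and then prove the classical form via a Cauchy--Schwarz decomposition. Concretely, let $Z = x^{2}$, which is again a non-negative random variable, and set $\theta = r^{2} \in (0,1)$. Because $x \geq 0$, the event $\{x > r\sqrt{\mathbb{E}[x^{2}]}\}$ is exactly $\{Z > \theta \mathbb{E}[Z]\}$, and the right-hand side of the stated bound rewrites as $(1-\theta)^{2}\, (\mathbb{E}[Z])^{2}/\mathbb{E}[Z^{2}]$. Thus the lemma is equivalent to
\begin{align*}
\mathbb{P}\bigl(Z > \theta \mathbb{E}[Z]\bigr) \;\geq\; (1-\theta)^{2}\, \frac{(\mathbb{E}[Z])^{2}}{\mathbb{E}[Z^{2}]}
\end{align*}
for any non-negative $Z$ and any $\theta \in (0,1)$, which is the standard Paley--Zygmund statement.

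To prove the standard form, I would split $\mathbb{E}[Z]$ according to whether $Z$ exceeds the threshold $\theta \mathbb{E}[Z]$ and bound each piece:
\begin{align*}
\mathbb{E}[Z] \;=\; \mathbb{E}\bigl[Z\,\mathds{1}_{\{Z \leq \theta\mathbb{E}[Z]\}}\bigr] + \mathbb{E}\bigl[Z\,\mathds{1}_{\{Z > \theta\mathbb{E}[Z]\}}\bigr] \;\leq\; \theta\,\mathbb{E}[Z] + \mathbb{E}\bigl[Z\,\mathds{1}_{\{Z > \theta\mathbb{E}[Z]\}}\bigr].
\end{align*}
For the remaining term I would apply the Cauchy--Schwarz inequality in the form $\mathbb{E}[Z \mathds{1}_{A}] \leq \sqrt{\mathbb{E}[Z^{2}]}\sqrt{\mathbb{P}(A)}$ with $A = \{Z > \theta\mathbb{E}[Z]\}$, giving
\begin{align*}
(1-\theta)\,\mathbb{E}[Z] \;\leq\; \sqrt{\mathbb{E}[Z^{2}]}\,\sqrt{\mathbb{P}\bigl(Z > \theta\mathbb{E}[Z]\bigr)}.
\end{align*}
Squaring both sides (both are non-negative) and dividing by $\mathbb{E}[Z^{2}]$ yields the desired lower bound. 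Translating back via $Z = x^{2}$ and $\theta = r^{2}$ recovers the statement of the lemma.

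There is essentially no difficult step; the proof is two lines of bookkeeping plus one application of Cauchy--Schwarz. The only mild obstacle is handling degenerate cases: if $\mathbb{E}[x^{4}] = +\infty$, the right-hand side of the stated inequality is zero and the bound is vacuous; if $\mathbb{E}[x^{2}] = 0$, then $x = 0$ almost surely and, under the convention $0/0 = 0$, both sides vanish. In the non-degenerate case $0 < \mathbb{E}[x^{2}] \leq \sqrt{\mathbb{E}[x^{4}]} < \infty$, the argument above applies verbatim, and the final division by $\mathbb{E}[Z^{2}] = \mathbb{E}[x^{4}]$ is legitimate.
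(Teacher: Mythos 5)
Your proof is correct. Note that the paper itself does not prove this lemma at all---it is quoted verbatim from \citep{petrov2007lower} as a known result---so there is no in-paper argument to compare against. Your derivation (substituting $Z = x^{2}$, $\theta = r^{2}$, splitting $\mathbb{E}[Z]$ at the threshold $\theta\mathbb{E}[Z]$, and applying Cauchy--Schwarz to $\mathbb{E}[Z\,\mathds{1}_{\{Z>\theta\mathbb{E}[Z]\}}]$) is the standard textbook proof of the Paley--Zygmund inequality, and your handling of the degenerate cases $\mathbb{E}[x^{2}]=0$ and $\mathbb{E}[x^{4}]=\infty$ is appropriate; the one-line observation that $\{x > r\sqrt{\mathbb{E}[x^{2}]}\} = \{x^{2} > r^{2}\mathbb{E}[x^{2}]\}$ for $x \geq 0$ is exactly the reduction needed.
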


Based on this result, for any $r \in (0, 1)$, we have: \begin{align}\label{eq:pz-quad}
    \mathbb{P}\Bigg( \big|v^{\intercal}\phi(z_{t+1})\big| > r \sqrt{\mathbb{E}\bigg[\big(v^{\intercal}\phi(z_{t+1})\big)^{2}\ \big|\ \mathcal{F}_{t} \bigg]}\ \ \Bigg|\ \mathcal{F}_{t} \Bigg) \geq (1 -r^{2})^{2}\dfrac{\mathbb{E}\bigg[\big(v^{\intercal}\phi(z_{t+1})\big)^{2}\ \big|\ \mathcal{F}_{t} \bigg]^{2}}{\mathbb{E}\bigg[\big(v^{\intercal}\phi(z_{t+1})\big)^{4}\ \big|\ \mathcal{F}_{t} \bigg]}. 
\end{align}

Let $\mathcal{V} = \{v \in {\mathbb{R}}^{n_{\phi}}:\|v\|_{2}=1\}$. To show that the BMSB condition holds, it is sufficient to establish the following two points: 

\begin{itemize} 
    \item $\underset{\mathcal{F}_{t},\ t\geq 0}{\inf}\ \ \underset{\substack{v \in \mathcal{V}}}{\inf}\ \mathbb{E}\big[\big(v^{\intercal}\phi(z_{t+1})\big)^{2}\ \big|\ \mathcal{F}_{t} \big] > 0$,
    \item and $\underset{\mathcal{F}_{t},\ t\geq 0}{\sup}\ \ \underset{\substack{v \in \mathcal{V}}}{\sup}\ \mathbb{E}\big[\big(v^{\intercal}\phi(z_{t+1})\big)^{4}\ \big|\ \mathcal{F}_{t} \big] < \infty$. 
\end{itemize}

These conditions ensure that the $\{\mathcal{F}_{t}\}_{t \geq 1}$-adapted process $\{\phi(z_{t})\}_{t \geq 1}$ satisfies the BMSB condition with some constants $s_{\phi} > 0$ and $p_{\phi} \in (0, 1)$. We will divide the proof into two parts:
\\

\textbf{Step 1. Showing that $\underset{\mathcal{F}_{t},\ t\geq 0}{\inf}\ \  \underset{\substack{v \in \mathcal{V}}}{\inf}\  \mathbb{E}\big[\big(v^{\intercal}\phi(z_{t+1})\big)^{2}\ \big|\ \mathcal{F}_{t} \big] > 0$:}\\

We begin by noting the following:
\begin{align*}
    \underset{\mathcal{F}_{t},\ t\geq 0}{\inf}\ \  \underset{\substack{v \in \mathcal{V}}}{\inf}\  \mathbb{E}\bigg[\big(v^{\intercal}\phi(z_{t+1})\big)^{2}\ \big|\ \mathcal{F}_{t} \bigg] &= \underset{\mathcal{F}_{t},\ t\geq 0}{\inf}\ \  \underset{\substack{v \in \mathcal{V}}}{\inf}\  \mathbb{E}\bigg[\big(v^{\intercal}\phi\big(x_{t+1}, u_{t+1}\big)\big)^{2}\ \big|\ \mathcal{F}_{t} \bigg]\\
    &= \underset{\mathcal{F}_{t},\ t\geq 0}{\inf}\ \  \underset{\substack{v \in \mathcal{V}}}{\inf}\  \mathbb{E}\bigg[\big(v^{\intercal}\phi\big(\theta_{*}\phi(z_{t}) + w_{t}, u_{t+1}\big)\big)^{2}\ \big|\ \mathcal{F}_{t} \bigg].
\end{align*}

Since $z_{t} \in \mathcal{F}_{t}$ while $w_{t}, u_{t+1} \not \in \mathcal{F}_{t}$, we can treat $z_{t}$ as a constant and $w_{t}, u_{t+1}=\eta_{t+1}$ as random variables. From the continuity of features $\phi(\cdot)$, we can conclude that:
\begin{align*}
    \underset{\mathcal{F}_{t},\ t\geq 0}{\inf}\ \  \underset{\substack{v \in \mathcal{V}}}{\inf}\  \mathbb{E}\bigg[\big(v^{\intercal}\phi(z_{t+1})\big)^{2}\ \big|\ \mathcal{F}_{t} \bigg] &= \underset{z \in \mathcal{Z}}{\inf}\ \  \underset{\substack{v \in \mathcal{V}}}{\inf}\  \mathbb{E}\bigg[\big(v^{\intercal}\phi\big(\underbrace{\theta_{*}\phi(z) + w}_\textit{=: h(z,w)}, \eta\big)\big)^{2} \bigg],
\end{align*}
where $w$, $\eta$ are independent random variables, as assumed in Assumptions~\ref{ass: bounded iid semi continuous noises} and~\ref{ass: bounded iid semi continuous inputs}. Now, let $\mathcal{N}_{v}^{z} = \big\{(w, \eta) \in \mathcal{W} \times \mathcal{U}:\ v^{\intercal} \phi\big( h(z, w), \eta\big) = 0\big\}$, and we have:
\begin{align*}
    \mathbb{E}\bigg[\big(v^{\intercal} \phi\big( h(z, w), \eta\big)\big)^{2} \bigg] &=  \underbrace{\mathbb{E}\bigg[\big(v^{\intercal} \phi\big( h(z, w), \eta\big)\big)^{2} \mathds{1}\big\{v^{\intercal} \phi\big( h(z, w), \eta\big) = 0\big\} \bigg]}_\mathit{=0}\\
    &\ \ \ + \mathbb{E}\bigg[\big(v^{\intercal} \phi\big( h(z, w), \eta\big)\big)^{2} \mathds{1}\big\{v^{\intercal} \phi\big( h(z, w), \eta\big) \neq 0 \big\} \bigg]\\
    &= \mathbb{E}\bigg[\big(v^{\intercal} \phi\big( h(z, w), \eta\big)\big)^{2}\ \big |\ (w, \eta) \not \in \mathcal{N}_{v}^{z} \bigg] \mathbb{P}\bigg((w, \eta) \not \in \mathcal{N}_{v}^{z} \bigg)\\
    &= \mathbb{E}\bigg[\big(v^{\intercal} \phi\big( h(z, w), \eta\big)\big)^{2}\ \big |\ (w, \eta) \not \in \mathcal{N}_{v}^{z} \bigg] \bigg(1 - \mathbb{P}\bigg((w, \eta) \in \mathcal{N}_{v}^{z} \bigg)\bigg).
\end{align*}
Therefore, we have:
\begin{equation}\label{eq:2-moment}
\begin{aligned}
    \underset{z \in \mathcal{Z}}{\inf}\ \  \underset{\substack{v \in \mathcal{V}}}{\inf}\ \mathbb{E}\bigg[\big(v^{\intercal} \phi\big( h(z, w), \eta\big)\big)^{2} \bigg] &= \underset{z \in \mathcal{Z}}{\inf}\ \  \underset{\substack{v \in \mathcal{V}}}{\inf}\ \mathbb{E}\bigg[\big(v^{\intercal} \phi\big( h(z, w), \eta\big)\big)^{2}\ \big |\ (w, \eta) \not \in \mathcal{N}_{v}^{z} \bigg]\\
    &\ \ \ \ \ \times \bigg(1 - \underset{z \in \mathcal{Z}}{\sup}\ \  \underset{\substack{v \in \mathcal{V}}}{\sup}\ \mathbb{P}\bigg((w, \eta) \in \mathcal{N}_{v}^{z} \bigg) \bigg).
\end{aligned}
\end{equation}
It is evident that if $\mathcal{N}_{v}^{z} = \emptyset$, then $$\underset{z \in \mathcal{Z}}{\inf}\ \  \underset{\substack{v \in \mathcal{V}}}{\inf}\ \mathbb{E}\bigg[\big(v^{\intercal} \phi\big( h(z, w), \eta\big)\big)^{2} \bigg] \not = 0,\ \ \hbox{and}\ \ \underset{z \in \mathcal{Z}}{\sup}\ \  \underset{\substack{v \in \mathcal{V}}}{\sup}\ \mathbb{P}\bigg((w, \eta) \in \mathcal{N}_{v}^{z} \bigg) = 0,$$ leading to $\underset{z \in \mathcal{Z}}{\inf}\ \  \underset{\substack{v \in \mathcal{V}}}{\inf}\ \mathbb{E}\bigg[\big(v^{\intercal} \phi\big( h(z, w), \eta\big)\big)^{2} \bigg] > 0$. Now we proceed with the case where $\mathcal{N}_{v}^{z} \not = \emptyset$. For this, we can use the following lemma concerning the zero set of real-analytic functions in terms of Lebesgue measure.
\begin{lemma}[The zero set of real-analytic functions~\citep{cruaciun2024stability}]\label{lem:meas-zero} The set of zeros of a non-trivial real-analytic function $f:\mathbb{R}^{n} \rightarrow \mathbb{R}$ has a Lebesgue measure zero in $\mathbb{R}^{n}$.   
\end{lemma}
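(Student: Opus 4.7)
The plan is to prove Lemma~\ref{lem:meas-zero} by induction on the dimension $n$, using the identity principle for real-analytic functions together with Fubini's theorem to pass from the one-dimensional base case to higher dimensions.

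For the base case $n=1$, I would invoke the identity principle for real-analytic functions of one variable: if a real-analytic $f:\R\to\R$ is not identically zero, then its zero set has no accumulation point in $\R$ (otherwise, at such an accumulation point all Taylor coefficients of $f$ would vanish, forcing $f\equiv 0$ on a neighborhood, and then by connectedness everywhere). A set with no accumulation point is at most countable, hence has Lebesgue measure zero.

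For the inductive step, write $x=(y,t)\in\R^{n-1}\times\R$, let $Z=\{x:f(x)=0\}$, and define the ``bad slice'' set $A=\{y\in\R^{n-1}: f(y,\cdot)\equiv 0\}$. The main step is to show $\lambda^{n-1}(A)=0$. For each $k\geq 0$ the function $g_k(y):=\partial_t^k f(y,0)$ is real-analytic on $\R^{n-1}$, and $A\subseteq \{y:g_k(y)=0\}$ for every $k$. If $\lambda^{n-1}(A)>0$, then by the inductive hypothesis each $g_k$ must be identically zero on $\R^{n-1}$ (since its zero set has positive measure). Picking any base point $y_0$, this yields $\partial_y^\alpha\partial_t^k f(y_0,0)=\partial_y^\alpha g_k(y_0)=0$ for all multi-indices $\alpha$ and all $k$, so every coefficient in the Taylor expansion of $f$ about $(y_0,0)$ vanishes; hence $f\equiv 0$ on a neighborhood of $(y_0,0)$, and by the standard connectedness argument (the set on which $f$ and all its derivatives vanish is both open and closed in $\R^n$) we get $f\equiv 0$, contradicting nontriviality.

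With $\lambda^{n-1}(A)=0$ established, I would finish by Fubini. Split $Z=(Z\cap(A\times\R))\cup(Z\cap(A^c\times\R))$. The first piece is contained in $A\times\R$, which has $n$-dimensional measure zero since $A$ is null and $\R$ is $\sigma$-finite. For the second piece, each slice $Z_y=\{t:f(y,t)=0\}$ with $y\in A^c$ is the zero set of a nontrivial one-variable real-analytic function, hence has one-dimensional Lebesgue measure zero by the base case; Fubini then gives $\lambda^n(Z\cap(A^c\times\R))=\int_{A^c}\lambda^1(Z_y)\,dy=0$. Adding the two pieces yields $\lambda^n(Z)=0$. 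The main obstacle is the inductive argument that $A$ is null: it relies on promoting ``all $t$-derivatives vanish on a positive-measure set of slices'' into ``all mixed partials vanish at a single point,'' and then applying the identity principle globally on $\R^n$, so I would take some care to state the identity principle precisely (e.g., as an open-and-closed argument on the connected domain $\R^n$) before invoking it.
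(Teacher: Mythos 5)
Your proof is correct. The paper does not actually prove this lemma --- it only cites it and remarks that it ``can be proved using the identity theorem along with Fubini's theorem'' --- and your induction on dimension (identity principle for the base case and for the bad-slice set $A$, then Fubini over the good slices) is precisely a complete and correct instantiation of that standard argument.
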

This is a known result and can be proved using the identity theorem along with Fubini's theorem. For further information on this topic, see sources such as~\citep{krantz2002primer,bogachev2007measure}.

Recall that we defined $h(z,w) = \theta_{*}\phi(z) + w$. Notice that $h(\cdot,\cdot)$ is real-analytic. Now consider 
\begin{align*}
    v^{\intercal} \phi\big(h(z, w), \eta\big) = \sum_{i=1}^{n_{\phi}} v^{i} \phi^{i}\big(h(z, w), \eta\big),
\end{align*}
where $\phi^{i}\big(h(z, w), \eta\big)$ are linearly independent. Hence, the sum $\sum_{i=1}^{n_{\phi}} v^{i} \phi^{i}\big(h(z, w), \eta\big) \not \equiv 0$ for any $v \in \mathcal{V}$. This implies that $v^{\intercal} \phi\big(h(z, w), \eta\big)$ is real-analytic and non-zero. Consequently, by Lemma~\ref{lem:meas-zero}, $\lambda^{n_{x}+n_{u}}(\mathcal{N}_{v}^{z}) = 0$ for any $v \in \mathcal{V}$. 

Under Assumptions~\ref{ass: bounded iid semi continuous inputs} and~\ref{ass: bounded iid semi continuous noises}, there cannot exist a set $\mathcal{E} \subset \mathcal{Z}$ of Lebesgue measure zero in $\mathbb{R}^{n_{x}+n_{u}}$ for which the $\mathbb{P}\big( (w, \eta) \in \mathcal{E}\big) = 1$. Taking this into account, along with the fact that \(\lambda^{n_{x}+n_{u}}(\mathcal{N}_{v}^{z}) = 0\) and that the sets \(\mathcal{V}\) and \(\mathcal{Z}\) are closed sets (implying they include all their limit points), we can conclude that $$\underset{z \in \mathcal{Z}}{\sup}\ \underset{\substack{v \in \mathcal{V}}}{\sup}\ \mathbb{P}\bigg((w, \eta) \in \mathcal{N}_{v}^{z} \bigg) \neq 1.$$ 

Moreover, since $\lambda^{n_{x}+n_{u}}(\mathcal{N}_{v}^{z}) = 0$ and $\lambda^{n_{x}+n_{u}}(\mathcal{W}\times\mathcal{U}) \not = 0$, it follows that $(\mathcal{N}_{v}^{z})^{c} \not = \emptyset$. This implies 
$$\underset{z \in \mathcal{Z}}{\inf}\ \  \underset{\substack{v \in \mathcal{V}}}{\inf}\ \mathbb{E}\bigg[\big(v^{\intercal} \phi\big( h(z, w), \eta\big)\big)^{2}\ \big |\ (w, \eta) \not \in \mathcal{N}_{v}^{z} \bigg] \not = 0.$$ 

Substituting these results into~\eqref{eq:2-moment}, we obtain:
\begin{align*}
    \underset{\mathcal{F}_{t},\ t\geq 0}{\inf}\ \  \underset{\substack{v \in \mathcal{V}}}{\inf}\  \mathbb{E}\big[\big(v^{\intercal}\phi(z_{t+1})\big)^{2}\ \big|\ \mathcal{F}_{t} \big] > 0.
\end{align*}
\\

\textbf{Step 2. Showing that $\underset{\mathcal{F}_{t},\ t\geq 0}{\sup}\ \  \underset{\substack{v \in \mathcal{V}}}{\sup}\  \mathbb{E}\big[\big(v^{\intercal}\phi(z_{t+1})\big)^{4}\ \big|\ \mathcal{F}_{t} \big] < \infty$:}

Since $z_{t} \in \mathcal{Z}$ for $t \geq 0$, and considering that the noise and disturbances are bounded while the features are real-analytic, it follows that $z_{t+1}|\mathcal{F}_{t}$ is a bounded random variable. Consequently, $v^{\intercal}\phi(z_{t+1})|\mathcal{F}_{t}$ is also bounded. Given that both $\mathcal{Z}$ and $\mathcal{V}$ are compact sets—meaning they contain all their limit points—and that any random variable with bounded support has finite moments, we conclude that 
$$
    \underset{\mathcal{F}_{t},\ t\geq 0}{\sup}\ \  \underset{\substack{v \in \mathcal{V}}}{\sup}\  \mathbb{E}\big[\big(v^{\intercal}\phi(z_{t+1})\big)^{4}\ \big|\ \mathcal{F}_{t} \big] < \infty.
$$

We finalize the proof by combining the results from Step 1 and Step 2.
\end{proof}

\section{Proofs for Theorem~\ref{th:lse-con-o-l} and Corollary~\ref{cor:lse-con-c-l}}\label{appendix:B}

\setcounter{section}{2}
\subsection{Proof of Theorem~\ref{th:lse-con-o-l}}\label{app:B1}

\begin{proof}
The proof hinges on the following key meta-theorem about the LSE convergence rate:

\begin{theorem}[LSE meta-theorem~\citep{simchowitz2018learning}]~\label{th:ols-meta}
Fix $\delta \in (0, 1)$, $T \geq 1$, and $0 \prec \Gamma_{sb} \prec \bar{\Gamma}$. Consider a random process $\{(y_{t}, x_{t})\}_{t\geq1} \in (\mathbb{R}^{n_{y}} \times \mathbb{R}^{n_{x}})^{T}$, and a filtration $\{\mathcal{F}_{t}\}_{t \geq 1}$. Suppose the following conditions hold:
\begin{itemize}
    \item $x_{t} = \theta_{*} y_{t} + w_{t}$, where $w_{t} | \mathcal{F}_{t}$ is a zero mean $\sigma_{w}^{2}$-sub-Gaussian,
    \item $\{y_{t}\}_{t \geq 1}$ is an $\{\mathcal{F}_{t}\}_{t \geq 1}$-adapted random process satisfying the $(k, \Gamma_{sb}, p)$-BMSB condition,
    \item $\mathbb{P}\big(\sum_{t=1}^{T} y_{t}y_{t}^{\intercal} \not \preceq T\bar{\Gamma} \big) \leq \delta$.
\end{itemize}
If the trajectory length $T$ satisfies 
\begin{align*}
    T \geq \dfrac{10k}{p^{2}} \Bigg( \log\bigg(\frac{1}{\delta}\bigg) +  \log \det \bigg( \bar{\Gamma} \Gamma_{sb}^{-1}\bigg) + 2n_{y}\log\bigg(\frac{10}{p}\bigg)\Bigg),
\end{align*}
then with probability at least $1-3\delta$, LSE estimation error is bounded by:
\begin{align*}
 \big \|\hat{\theta}_{T} - \theta_{*} \big\|_{2} \leq \dfrac{90 \sigma_{w}}{p} \sqrt{ \dfrac{n_{x} +\log\bigg(\frac{1}{\delta}\bigg) +  \log \det \bigg( \bar{\Gamma} \Gamma_{sb}^{-1}\bigg) + n_{y}\log\bigg(\frac{10}{p}\bigg)}{T\sigma_{min}(\Gamma_{sb})}}.
\end{align*}
\end{theorem}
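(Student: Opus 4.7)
The plan is to reduce the LSE error bound to two quantitative facts about the empirical Gram matrix $V_T := \sum_{t=1}^T y_t y_t^{\intercal}$ and the noise–covariate cross term $S_T := \sum_{t=1}^T w_t y_t^{\intercal}$: a uniform lower bound $V_T \succeq \tfrac{Tp^2}{8k}\,\Gamma_{sb}$ driven by BMSB, and a self-normalized tail bound on $S_T\,V_T^{-1/2}$ driven by the conditional sub-Gaussianity of $w_t$. Writing the normal equations as $\hat\theta_T - \theta_* = S_T\, V_T^{-1}$, on the event $V_T \succ 0$ one has
\begin{equation*}
\|\hat\theta_T - \theta_*\|_2 \leq \|S_T\, V_T^{-1/2}\|_2 \cdot \sigma_{\min}(V_T)^{-1/2},
\end{equation*}
and the full error bound will follow by a union bound over the two failure events together with the assumed upper-envelope event $\{V_T \preceq T\bar\Gamma\}$, each absorbing $\delta$ of the $3\delta$ budget.

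For the lower bound on $V_T$, I would first fix a unit direction $v$ and convert BMSB into a block-level anti-concentration. By the pigeonhole form of the BMSB inequality, within each length-$k$ block starting at any $\mathcal{F}_{jk}$ one has $\mathbb{P}(|v^{\intercal} y_{jk+i}| \geq \sqrt{v^{\intercal}\Gamma_{sb}v}\mid \mathcal{F}_{jk}) \geq p$ for at least one $i \leq k$, yielding a $\{0,1\}$-valued block indicator $Z_j$ with conditional mean at least $p$. A Freedman-type tail inequality for Bernoulli supermartingales then gives $\sum_j Z_j \geq \tfrac{Tp}{2k}$ with failure probability $\exp(-Tp^2/(8k))$, which forces $v^{\intercal} V_T v \geq \tfrac{Tp^2}{8k}\,v^{\intercal}\Gamma_{sb}v$. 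To upgrade this pointwise statement to a uniform one, I take an $\epsilon$-net $\mathcal{N}$ of the $\Gamma_{sb}^{1/2}$-unit sphere with $\epsilon \sim p$; on the event $V_T \preceq T\bar\Gamma$ a volumetric argument bounds $|\mathcal{N}|$ by $(10/p)^{2n_y}\det(\bar\Gamma\,\Gamma_{sb}^{-1})^{1/2}$, after which a standard net-perturbation closes the gap from net points to arbitrary $v$. The condition on $T$ in the theorem statement is precisely what is needed for this union bound to beat $\delta$, explaining both the $\log\det(\bar\Gamma\,\Gamma_{sb}^{-1})$ and the $2n_y\log(10/p)$ terms there.

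For the upper bound on $\|S_T\,V_T^{-1/2}\|_2$, each row of $S_T$ is a sum of conditionally $\sigma_w^2$-sub-Gaussian martingale differences, so the method-of-mixtures self-normalized inequality (Abbasi-Yadkori et al.) yields, for any $\lambda>0$,
\begin{equation*}
\|(S_T)_{i\cdot}\,(V_T+\lambda I)^{-1/2}\|_2 \leq \sigma_w\sqrt{\log\det\bigl((V_T+\lambda I)/\lambda\bigr) + 2\log(1/\delta)}.
\end{equation*}
On the upper-envelope event $V_T \preceq T\bar\Gamma$, combined with the BMSB-based lower bound $V_T \succeq (Tp^2/8k)\,\Gamma_{sb}$ from the previous paragraph, one may replace the regularized log-determinant by $\log\det(\bar\Gamma\,\Gamma_{sb}^{-1})$ at the cost of dimension-independent constants. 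A discretization of the $n_x$-dimensional output unit sphere then converts the row-wise bound into an operator-norm bound, producing the additive $n_x$ inside the square root. Combining with the estimate from the first paragraph and the eigenvalue lower bound yields the stated $1/\sqrt{T\sigma_{\min}(\Gamma_{sb})}$ rate with the advertised logarithmic factors.

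The main obstacle is the careful coordination of the two $\epsilon$-net arguments with the BMSB block structure. The net must live in the $\Gamma_{sb}^{1/2}$-geometry in order for the BMSB small-ball bound to be nontrivial, yet the net cardinality must be tamed by the upper envelope $\bar\Gamma$ (otherwise one would incur a useless $\log\det V_T$). Getting the right quantitative interplay — net radius $\sim p$, block length $k$, horizon $T$, and the two $\det$ factors — is what forces the lower bound on $T$ to take its specific form and what makes $\log\det(\bar\Gamma\,\Gamma_{sb}^{-1})$ appear simultaneously in the burn-in condition and in the final error bound.
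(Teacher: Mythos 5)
The paper does not actually prove this statement: it is imported verbatim as a meta-theorem from Simchowitz et al.\ (2018, Theorem 2.4) and invoked as a black box in Appendix~B.1, so there is no in-paper proof to compare against. Your sketch is a faithful reconstruction of the original argument in that reference --- the $\hat\theta_T-\theta_*=S_T V_T^{-1}$ decomposition, the BMSB-to-Gram lower bound via block indicators combined with an $\epsilon$-net in the $\Gamma_{sb}^{1/2}$-geometry whose cardinality is controlled by the $\bar\Gamma$ envelope (which is exactly where the $\log\det(\bar\Gamma\Gamma_{sb}^{-1})+2n_y\log(10/p)$ burn-in comes from), and the self-normalized martingale bound for the cross term with a net over the $n_x$-dimensional output sphere. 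The only internal quibble is quantitative and harmless: the block-indicator argument you describe yields $v^{\intercal}V_T v\gtrsim (Tp/2k)\,v^{\intercal}\Gamma_{sb}v$ on the good event rather than the weaker $(Tp^2/8k)$ factor you state, so your conclusion still follows.
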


Since $w_{t}$ satisfies the Assumption~\ref{ass: bounded iid semi continuous noises}), and $w_{t} \notin \mathcal{F}_{t}$, then $\sigma{w} | \mathcal{F}_{t}$ is sub-Gaussian with parameter $\sigma{w}$. Additionally, system~(\ref{eq:sys}) is linear in unknown parameters $\theta_{*}$. From Theorem~\ref{th:BMSB-o-l}, the $\{\mathcal{F}_{t}\}_{t \geq 1}$-adapted process $\{\phi(z_{t})\}_{t \geq 1}$ satisfies the $(1, s^{2}_{\phi}I_{n_{\phi}}, p_{\phi})$-BMSB condition for some $s_{\phi} > 0$ and $p_{\phi} \in (0, 1]$. 

To complete the proof, it is left to show that for any $\delta \in (0,1)$, there exists a $\bar{\Gamma} \succ s^{2}_{\phi}I_{n_{\phi}}$, such that  
$$\mathbb{P}\bigg( \sum_{t=1}^{T} \phi(z_{t})\phi^{\intercal}(z_{t}) \npreceq T\bar{\Gamma}\bigg) \leq \delta.$$

To see this, note that for $\bar{b}_{\phi} = \sup_{t \geq 0} \mathbb{E}\big[\|\phi(z_{t})\|^{2}_{2}\big]$, we have:
\begin{equation*}
\begin{aligned}
\mathbb{P}\bigg( \sum_{t=1}^{T} \phi(z_{t})\phi^{\intercal}(z_{t}) \npreceq \dfrac{\bar{b}_{\phi}T}{\delta}I_{n_{\phi}}\bigg) &= \mathbb{P}\Bigg(\lambda_{\max}\bigg(\sum_{t=1}^{T} \phi(z_{t})\phi^{\intercal}(z_{t}) \bigg)\succ \lambda_{\max}\bigg(\dfrac{\bar{b}_{\phi}T}{\delta}I_{n_{\phi}}\bigg)\Bigg)\\
&= \mathbb{P}\bigg(\big\|\sum_{t=1}^{T} \phi(z_{t})\phi^{\intercal}(z_{t}) \big\|_{2} \succ \dfrac{\bar{b}_{\phi}T}{\delta} \bigg)\\
&\leq \dfrac{\delta \mathbb{E}\bigg[\big\|\sum_{t=1}^{T} \phi(z_{t})\phi^{\intercal}(z_{t}) \big\|_{2} \bigg]}{\bar{b}_{\phi}T}.
\end{aligned} 
\end{equation*}

In addition, we have:
\begin{align*}
     \mathbb{E}\bigg[\big\|\sum_{t=1}^{T} \phi(z_{t})\phi^{\intercal}(z_{t})\big\|_{2} \bigg] \leq
    \sum_{t=1}^{T} \mathbb{E}\big[ \big\|\phi(z_{t})\phi^{\intercal}(z_{t})\big\|_{2} \big] \leq \sum_{t=1}^{T} \mathbb{E}\big[\|\phi(z_{t})\|^{2}_{2}\big] \leq T \sup_{t \geq 0} \mathbb{E}\big[\|\phi(z_{t})\|^{2}_{2}\big],
\end{align*}
which implies that:
\begin{equation}\label{eq:state-corr}
\begin{aligned}
\mathbb{P}\bigg( \sum_{t=1}^{T} \phi(z_{t})\phi^{\intercal}(z_{t}) \npreceq  T\bar{\Gamma}\bigg)
&\leq \delta,
\end{aligned}
\end{equation}
where $\bar{\Gamma} = \dfrac{\bar{b}_{\phi}}{\delta}I_{n_{\phi}}$. since  $z_{t} \in \mathcal{Z}$ for all $t \geq 0$, with $\mathcal{Z}$ being a compact set (due to the system's LISS property and features $\phi(\cdot)$ being real-analytic), such a bounded $\bar{b}_{\phi}$ exists, completing the proof.
\end{proof}

\setcounter{section}{2}
\subsection{Proof of Corollary~\ref{cor:lse-con-c-l}}\label{app:B2}

\begin{proof}
We start the proof by stating the following lemma which is extension of Theorem~\ref{th:BMSB-o-l} to the case with $u_{t} = \pi(x_{t}) + \eta_{t}$. 

\begin{lemma}\label{lem:BMSB_c_l}
Let $u_t=\pi(x_t)+\eta_t$, $\pi(\cdot)$ is real-analytic, $\eta_{t}$ satisfies Assumption~\ref{ass: bounded iid semi continuous inputs}. Consider the filtration $\mathcal{F}_{t} = \mathcal{F}(w_{0}, \cdots,w_{t-1}, x_{0}, \cdots, x_{t}, \pi(x_{0}), \cdots, \pi(x_{t}), \eta_{0}, \cdots, \eta_{t})$. Suppose Assumptions~\ref{ass: analytic},~\ref{ass: bounded iid semi continuous noises} hold and that the closed-loop system $x_{t+1}=\theta_*\phi(x_t, \pi(x_t)+\eta_t)+w_t$ satisfies Assumption \ref{ass: iss system} for both $w_t$ and $\eta_t$. Then there exist $s_{\phi} > 0$, and $p_{\phi} \in (0, 1)$ such that the $\{\mathcal{F}_{t}\}_{t \geq 1}$-adapted process $\big\{\phi\big(x_{t}, u_{t}\big)\big\}_{t \geq 1}$ satisfies the
$\big(1, s_{\phi}^{2}I_{n_{\phi}}, p_{\phi}\big)$-BMSB condition. 
\end{lemma}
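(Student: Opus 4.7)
The plan is to mirror the structure of the proof of Theorem~\ref{th:BMSB-o-l}, reducing the closed-loop case to the same two moment-bound tasks (lower bound on the second conditional moment, upper bound on the fourth conditional moment) and then invoking the Paley--Zygmund inequality (Lemma~\ref{lem:p-z}) exactly as before. The boundedness of $z_t = (x_t, u_t)$ follows from the closed-loop LISS assumption together with Assumptions~\ref{ass: bounded iid semi continuous noises} and~\ref{ass: bounded iid semi continuous inputs}: $x_t$ lies in a compact set $\mathcal{X}$, and since $\pi$ is continuous and $\eta_t$ is bounded, $u_t = \pi(x_t) + \eta_t$ lies in a compact set $\mathcal{U}$ as well. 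Thus the fourth-moment step from the open-loop proof carries over unchanged, using only that $\phi$ is continuous and $z_t$ takes values in a compact set.

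The heart of the argument is the lower bound on $\mathbb{E}\big[(v^{\intercal}\phi(z_{t+1}))^{2}\mid\mathcal{F}_{t}\big]$. Conditioning on $\mathcal{F}_{t}$ freezes $z_t$ (and hence $\theta_* \phi(z_t)$), while $w_t$ and $\eta_{t+1}$ remain independent random variables satisfying the semi-continuity assumptions. Writing
\begin{align*}
g_{z,v}(w,\eta)\ :=\ v^{\intercal}\phi\bigl(\theta_{*}\phi(z)+w,\ \pi(\theta_{*}\phi(z)+w)+\eta\bigr),
\end{align*}
the proof reduces to showing that $g_{z,v}$ is a non-trivial real-analytic function on $\mathbb{R}^{n_x+n_u}$ for every $z\in\mathcal{Z}$ and every unit vector $v$, so that Lemma~\ref{lem:meas-zero} yields $\lambda^{n_x+n_u}\bigl(\{g_{z,v}=0\}\bigr)=0$, and then the semi-continuity of the joint distribution of $(w_t,\eta_{t+1})$ guarantees $\sup_{z,v}\mathbb{P}(g_{z,v}(w_t,\eta_{t+1})=0)<1$, exactly as in Step 1 of Appendix~\ref{appendix:A}.

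The main obstacle, and the only genuinely new step, is establishing this non-triviality in the closed-loop setting. Real-analyticity of $g_{z,v}$ is immediate: it is a composition of the real-analytic functions $\phi$ and $\pi$ together with affine maps. For non-triviality, observe that the change of variables $(w,\eta)\mapsto(x',u')$ defined by $x' = \theta_*\phi(z)+w$, $u' = \pi(x')+\eta$ is a real-analytic bijection from $\mathbb{R}^{n_x+n_u}$ onto itself (its inverse is $w = x'-\theta_*\phi(z)$, $\eta = u'-\pi(x')$). Hence $g_{z,v}\equiv 0$ on $\mathbb{R}^{n_x+n_u}$ if and only if $v^{\intercal}\phi(x',u')\equiv 0$ on $\mathbb{R}^{n_x+n_u}$, which by the linear independence of the components of $\phi$ forces $v=0$, contradicting $\|v\|_2=1$. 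Therefore $g_{z,v}\not\equiv 0$ for every unit $v$ and every $z\in\mathcal{Z}$, and the remainder of the proof is identical to the open-loop case.

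After Lemma~\ref{lem:BMSB_c_l} is established, Corollary~\ref{cor:lse-con-c-l} and Corollary~\ref{cor:sme-con-c-l} follow by plugging the closed-loop BMSB conclusion into the same meta-arguments (Theorem~\ref{th:ols-meta} and the SME proof) that were used for the open-loop results, with $\bar{b}_\phi$ now computed for the closed-loop compact set $\mathcal{Z}$.
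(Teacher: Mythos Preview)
Your proposal is correct and follows the same overall route as the paper: closed-loop LISS gives a compact $\mathcal{Z}$, the fourth-moment bound follows from compactness and continuity of $\phi$, and the second-moment lower bound goes through the zero-set-of-analytic-functions lemma (Lemma~\ref{lem:meas-zero}) plus semi-continuity of the joint law of $(w_t,\eta_{t+1})$, exactly as in Appendix~\ref{appendix:A}.

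The one place you diverge is the non-triviality argument for $g_{z,v}$. The paper argues via a case split on whether the components of $\pi(h(z,w))$ are linearly dependent or independent of the components of $h(z,w)$, and then invokes the ``additive nature of $\eta$'' to conclude that the composed features $\phi^{i}\bigl(h(z,w),\pi(h(z,w))+\eta\bigr)$ remain linearly independent in $(w,\eta)$. Your change-of-variables observation---that $(w,\eta)\mapsto\bigl(\theta_{*}\phi(z)+w,\ \pi(\theta_{*}\phi(z)+w)+\eta\bigr)$ is a real-analytic bijection of $\mathbb{R}^{n_x+n_u}$ onto itself, so $g_{z,v}\equiv 0$ is equivalent to $v^{\intercal}\phi\equiv 0$ on all of $\mathbb{R}^{n_x+n_u}$---is shorter and reduces the question directly to the standing linear-independence hypothesis on the feature functions without any case analysis. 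Both arguments reach the same conclusion; yours is the cleaner of the two.
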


\begin{proof}[Proof of Lemma \ref{lem:BMSB_c_l}]
The proof closely follows the steps of Theorem~\ref{th:BMSB-o-l}. First, observe that $u_{t} = \pi(x_{t}) + \eta_{t}$. Since $\eta_{t}$ satisfies the conditions outlined in Assumption~\ref{ass: bounded iid semi continuous inputs}, and the closed-loop system $x_{t+1}=\theta_*\phi(x_t, \pi(x_t)+\eta_t)+w_t$ adheres to Assumption~\ref{ass: iss system} with respect to both $w_t$ and $\eta_t$, there exist functions $\gamma \in \mathcal{K}$ and $\beta \in \mathcal{KL}$ such that for all $t \geq 0$ the following holds:
$$
x_{t} \in \mathcal{X} = \bigg\{x \in \mathbb{R}^{n} : \|x\|_{2} \leq \beta(\rho_{x}, 0) + \gamma(\rho)\bigg\}
$$
with parameters $\rho_x \geq \|x_0\|_{2}$ and $\rho \geq \max(w_{\max}, u_{\max})$. Let $\mathcal{Z} = \mathcal{X} \times \mathcal{U}$, where $\mathcal{U}$ is a compact set containing $u_{t}$ for all $t\geq 0$. Thus, $z_{t} \in \mathcal{Z}$ for all $t \geq 0$. The set $\mathcal{Z}$ is a compact subset of $\mathbb{R}^{n_{x}+n_{u}}$. The remaining part of the proof, specifically to show that $\underset{\mathcal{F}_{t},\ t\geq 0}{\sup}\ \  \underset{\substack{v \in \mathcal{V}}}{\sup}\  \mathbb{E}\big[\big(v^{\intercal}\phi(z_{t+1})\big)^{4}\ \big|\ \mathcal{F}_{t} \big] < \infty$ follows similarly to the proof in Theorem~\ref{th:BMSB-o-l}. 

It remains to show that $\underset{\mathcal{F}_{t},\ t\geq 0}{\inf}\ \  \underset{\substack{v \in \mathcal{V}}}{\inf}\  \mathbb{E}\big[\big(v^{\intercal}\phi(z_{t+1})\big)^{2}\ \big|\ \mathcal{F}_{t} \big] > 0$. This can be shown as follows:
\begin{align*}
    \underset{\mathcal{F}_{t},\ t\geq 0}{\inf}\ \  \underset{\substack{v \in \mathcal{V}}}{\inf}\  \mathbb{E}\bigg[\big(v^{\intercal}\phi(z_{t+1})\big)^{2}\ \big|\ \mathcal{F}_{t} \bigg] &= \underset{\mathcal{F}_{t},\ t\geq 0}{\inf}\ \  \underset{\substack{v \in \mathcal{V}}}{\inf}\  \mathbb{E}\bigg[\big(v^{\intercal}\phi\big(x_{t+1}, u_{t+1}\big)\big)^{2}\ \big|\ \mathcal{F}_{t} \bigg]\\
    &= \underset{\mathcal{F}_{t},\ t\geq 0}{\inf}\ \  \underset{\substack{v \in \mathcal{V}}}{\inf}\  \mathbb{E}\bigg[\bigg(v^{\intercal}\phi\big(x_{t+1}, \pi(x_{t+1}) + \eta_{t+1}\big)\bigg)^{2}\ \big|\ \mathcal{F}_{t} \bigg]\\
    &= \underset{\mathcal{F}_{t},\ t\geq 0}{\inf}\ \  \underset{\substack{v \in \mathcal{V}}}{\inf}\  \mathbb{E}\bigg[\bigg(v^{\intercal}\phi\bigg(\theta_{*}\phi(z_{t}) + w_{t},\\
    &\ \ \ \ \ \ \ \ \ \ \ \ \ \ \ \ \ \ \ \ \ \ \ \ \ \ \ \ \ \ \ \ \ \ \ \ \ \ \ \ \ \ \ \ \ \pi\big(\theta_{*}\phi(z_{t}) + w_{t}\big) + \eta_{t+1}\bigg)\bigg)^{2}\ \big|\ \mathcal{F}_{t} \bigg].
\end{align*}
Since $z_{t} \in \mathcal{F}_{t}$ and $w_{t}, \eta_{t+1} \not \in \mathcal{F}_{t}$ then $z_{t}$ is treated as constant while $w_{t}, \eta_{t+1}$ are considered random variables. Then, by the continuity of $\phi$ we conclude that
\begin{align*}
    \underset{\mathcal{F}_{t},\ t\geq 0}{\inf}\ \  \underset{\substack{v \in \mathcal{V}}}{\inf}\  \mathbb{E}\bigg[\big(v^{\intercal}\phi(z_{t+1})\big)^{2}\ \big|\ \mathcal{F}_{t} \bigg] &= \underset{z \in \mathcal{Z}}{\inf}\ \  \underset{\substack{v \in \mathcal{V}}}{\inf}\  \mathbb{E}\bigg[\bigg(v^{\intercal}\phi\bigg(\underbrace{\theta_{*}\phi(z) + w}_\textit{=: h(z,w)}, \pi\big(\underbrace{\theta_{*}\phi(z) + w}_\textit{=: h(z,w)}\big) + \eta\bigg)\bigg)^{2} \bigg]\\
    &= \underset{z \in \mathcal{Z}}{\inf}\ \  \underset{\substack{v \in \mathcal{V}}}{\inf}\  \mathbb{E}\bigg[\bigg(v^{\intercal}\phi\bigg(h(z,w), \pi\big(h(z,w)\big) + \eta\bigg)\bigg)^{2} \bigg]
\end{align*}
where $w$ and $\eta$ are independent random variables constrained as described in Assumptions~\ref{ass: bounded iid semi continuous noises} and~\ref{ass: bounded iid semi continuous inputs}. By letting $\mathcal{N}_{v}^{z} = \big\{(w, \eta) \in \mathcal{W} \times \mathcal{U}:\ v^{\intercal} \phi\big( h(z, w), \pi\big(h(z, w)\big) + \eta\big) = 0\big\}$, similar to~(\ref{eq:2-moment}) we have:
\begin{align*}
   \underset{z \in \mathcal{Z}}{\inf}\ \  \underset{\substack{v \in \mathcal{V}}}{\inf}\  \mathbb{E}\bigg[\bigg(v^{\intercal}\phi\bigg(h(z,w), &\pi\big(h(z,w)\big) + \eta\bigg)\bigg)^{2} \bigg]\\
   &= \underset{z \in \mathcal{Z}}{\inf}\ \  \underset{\substack{v \in \mathcal{V}}}{\inf}\ \mathbb{E}\bigg[\bigg(v^{\intercal}\phi\bigg(h(z,w), \pi\big(h(z,w)\big) + \eta\bigg)\bigg)^{2}\ \big |\ (w, u) \not \in \mathcal{N}_{v}^{z} \bigg]\\
    &\ \ \ \ \ \  \times \bigg(1 - \underset{z \in \mathcal{Z}}{\sup}\ \  \underset{\substack{v \in \mathcal{V}}}{\sup}\ \mathbb{P}\bigg((w, \eta) \in \mathcal{N}_{v}^{z} \bigg) \bigg).
\end{align*}
We aim to show that $\lambda^{n_x + n_u}(\mathcal{N}_{v}^z) = 0$ by applying Lemma~\ref{lem:meas-zero}. Note that $\phi(\cdot)$, $h(\cdot)$, and $\pi(\cdot)$ are real-analytic. To use the results of Lemma~\ref{lem:meas-zero}, we need to establish that $v^{\intercal} \phi\bigg( h(z, w), \pi\big(h(z, w)\big) + \eta\bigg)$ is non-zero for any $v \in \mathcal{V}$. First, observe that: 
\begin{align*}
    v^{\intercal} \phi\big(h(z, w), \pi(h(z, w)) + \eta\big) = \sum_{i=1}^{n_{\phi}} v_{i} \phi^{i}\big(h(z, w), \pi(h(z, w)) + \eta)\big).
\end{align*}
Now consider two scenarios:
\begin{itemize}
    \item All components of  $\pi\big(h(z, w)\big)$ are linearly independent with any component of $h(z, w)$.
    \item At least one component of $\pi\big(h(z, w)\big)$ is linearly dependent with one or more components of $h(z, w)$.
\end{itemize}
In both cases, due to the additive nature of $\eta$, all the functions $\phi^{i}\big(h(z, w), \pi(h(z, w)) + \eta)\big)$ with $i=1,\cdots,n_{\phi}$ are linearly independent, ensuring that $v^{\intercal} \phi\big(h(z, w), \pi(h(z, w)) + \eta\big) \not \equiv 0$ for any $v \in \mathcal{V}$. The remainder of the proof follows similarly to the argument in Theorem~\ref{th:BMSB-o-l}.
\end{proof}

Using this Lemma, Theorem~\ref{th:ols-meta}, and reasoning similar to that in the proof of Theorem~\ref{th:lse-con-o-l}, the proof can be completed.
\end{proof}
\section{Proofs for Theorem~\ref{th:sme-con-o-l}, Corollary~\ref{cor:sme-rate}, and Corollary~\ref{cor:sme-con-c-l}}\label{appendix:C}

\setcounter{section}{3}
\subsection{Proof of Theorem~\ref{th:sme-con-o-l}}\label{app:C1}

\begin{proof}
The proof follows from applying the following meta-theorem on the convergence rate of SME.

\begin{theorem}[SME meta-theorem~\citep{li2024icml}]\label{lem:sm-meta}
Consider a general time series model with linear responses as follows:
$$x_{t} = \theta_{*} y_{t} + w_{t},\ \ \ t \geq 0.$$
Also, define the filtration $\mathcal{F}_{t} = \mathcal{F}(w_{0}, \cdots, w_{t-1}, y_{0}, \cdots, y_{t})$. Assume the following conditions are met:
\begin{itemize}
    \item $w_{t}$ are i.i.d. with variance $\sigma^{2}_{w}I_{n_{x}}$, and box-constrained, i.e., $w_{t} \in \mathcal{W} = \{w\in \mathbb{R}^{n_{x}}: \|w\|_{\infty} \leq w_{\max}\}$ for some $w_{\max} > 0$.
    \item $\{y_{t}\}_{t \geq 1}$ is an $\{\mathcal{F}_{t}\}_{t \geq 1}$-adapted random process satisfying the $(k, s_{y}^{2}I_{n_{y}}, p_{y})$-BMSB condition.
    \item There exists $b_{y} > 0$ such that $\|y_{t}\|_{2} \leq b_{y}$ almost surely for all $t \geq 0$.
    \item For any $\ell > 0$, there exists $q_{w}(\ell) > 0$, such that for any $1 \leq j \leq n$ and $t \geq 0$, we have $\mathbb{P}(w_{t}^{j} + w_{\max} \leq \ell) \geq q_{w}(\ell) > 0$, $\mathbb{P}(w_{\max} - w_{t}^{j} \leq \ell) \geq q_{w}(\ell) > 0$.
\end{itemize}
Then for any $m \geq 1$ and any $\delta \in (0, 1)$, when $T > m$, the diameter of the uncertainty set 
$$\Theta_{T} = \bigcap \limits_{t=0}^{T-1} \bigg\{\hat{\theta}: x_{t} - \hat{\theta} y_{t} \in \mathcal{W} \bigg\}$$ satisfies:
\begin{align*}
    \mathbb{P}\bigg(\textup{diam}(\Theta_{T}) > \delta\bigg) 
    &\leq 544 \frac{T}{m} n_{y}^{2.5} \log(a_{2}n_{y})a_{2}^{n_{y}}\exp(-a_{3}m)\\
    &\ \ \ + 544 n_{x}^{2.5} n_{y}^{2.5} \log(a_{4} n_{x} n_{y}) a_{4}^{n_{x}n_{y}}\bigg(1-q_{w} \bigg(\dfrac{a_{1}\delta}{4\sqrt{n_{x}}}\bigg)\bigg)^{\lceil T/ m \rceil},
\end{align*}
where $a_{1} = \frac{s_{y}p_{y}}{4}$, $a_{2} = \frac{64b_{y}^{2}}{s_{y}^{2}p_{y}^{2}}$, $a_{3} = \frac{p_{y}^{2}}{8}$, $a_{4} = \max\bigg(\frac{4b_{y}\sqrt{n_{x}}}{a_{1}}, 1\bigg)$.
\end{theorem}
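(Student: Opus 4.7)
The plan is to bound $\mathbb{P}(\mathrm{diam}(\Theta_T) > \delta)$ by an $\epsilon$-net plus block-concentration argument, with the two terms in the stated bound corresponding to two distinct failure modes: the first captures failure of a persistent-excitation statement derived from BMSB on $y_t$, and the second captures the noise $w_t$ failing to activate the feasibility constraints by visiting the boundary of $\mathcal{W}$ often enough. Because $w_t \in \mathcal{W}$ always, $\theta_* \in \Theta_T$, so a diameter exceeding $\delta$ forces the existence of a deviation $\Delta = \hat\theta - \theta_*$ with $\|\Delta\|_F \gtrsim \delta$ and $w_t - \Delta y_t \in \mathcal{W}$ for all $t < T$, i.e., $|w_t^{(j)} - (\Delta y_t)^{(j)}| \leq w_{\max}$ coordinate-wise. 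Writing $\Delta = \alpha V$ with $\|V\|_F = 1$ and $\alpha > 0$, the question reduces to how large $\alpha$ can be before some coordinate-wise constraint is violated.

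\textbf{First term (persistent excitation via BMSB).} Partition $\{1,\ldots,T\}$ into $\lceil T/m \rceil$ consecutive blocks of length $m$. Using a Simchowitz--Tu--Recht style small-ball argument (Azuma--Hoeffding applied to the BMSB-supermartingale $\mathbf{1}\{|v^\top y_t| \geq s_y\} - p_y$), the BMSB condition implies that for any fixed unit $v \in \mathbb{R}^{n_y}$, the block-level event $\sum_{t \in B} \mathbf{1}\{|v^\top y_t| \geq s_y\} \geq p_y m/2$ fails with probability at most $\exp(-a_3 m)$, where $a_3 = p_y^2/8$ arises from the square in Azuma's inequality. Union bounding over the $T/m$ blocks and over an $\epsilon$-net on the unit sphere in $\mathbb{R}^{n_y}$ of cardinality $\sim a_2^{n_y}$ (with $\epsilon$ tuned so the Lipschitz transfer to the full sphere absorbs the free radius $b_y$, and with the polynomial slack $n_y^{2.5}\log(a_2 n_y)$ tracking the covering/Lipschitz overhead) yields the first term and delivers a uniform lower bound on $\sum_t (v^\top y_t)^2$ on a good event $G_1$.

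\textbf{Second term (boundary visits activate constraints).} On $G_1$, for every unit-Frobenius $V \in \mathbb{R}^{n_x \times n_y}$ each block contains some time $t^*$ at which $|(V y_{t^*})^{(j^*)}| \geq s_y p_y/\sqrt{n_x}$, where $j^*$ is chosen as a row of $V$ with Euclidean mass at least $1/\sqrt{n_x}$. The feasibility constraint at $(t^*, j^*)$ tightens to $\alpha \leq (w_{\max} \pm w_{t^*}^{(j^*)})/|V_{j^*} y_{t^*}|$, so if $w_{t^*}^{(j^*)}$ lands within $a_1 \delta/(4\sqrt{n_x})$ of the appropriate boundary $\pm w_{\max}$, the constraint forces $\alpha < \delta/4$. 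Because $w_{t^*}$ is independent of $\mathcal{F}_{t^*}$, the fourth hypothesis of the theorem gives this boundary-visit event probability at least $q_w(a_1 \delta/(4\sqrt{n_x}))$ per block, and the probability it fails in every one of the $\lceil T/m \rceil$ blocks is at most $(1-q_w(\cdot))^{\lceil T/m \rceil}$. An $\epsilon$-net over $V \in \mathbb{R}^{n_x \times n_y}$ of cardinality $\sim a_4^{n_x n_y}$, with polynomial slack $n_x^{2.5} n_y^{2.5} \log(a_4 n_x n_y)$, delivers the second term.

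\textbf{Main obstacle.} The principal technical difficulty is the measurability interplay between the BMSB-selected time $t^*$ (driven by $y_{t^*}$, measurable in $\mathcal{F}_{t^*}$) and the boundary-visit event (driven by $w_{t^*}$, independent of $\mathcal{F}_{t^*}$) at the \emph{same} instant: the extraction of $t^*$ and $j^*$ from the BMSB step must be done in an $\mathcal{F}_{t^*}$-measurable way so that the independent noise $w_{t^*}$ can legitimately be subjected to the $q_w$ bound. A secondary complication is aligning the $\epsilon$-net resolution with the Lipschitz constants of the feasibility map in $V$ (which scale with $b_y$); it is this balance that pins down the polynomial-in-dimension factors and the additive $\log$ terms appearing in the stated bound, and it is the coupling of these two covering arguments with the two independent block-concentration arguments that produces the exact form of the bound.
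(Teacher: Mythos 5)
The paper does not actually prove this statement: it is imported verbatim as Theorem~1 of \citet{li2024icml} and used as a black box inside the proof of Theorem~\ref{th:sme-con-o-l}, so there is no in-paper argument to compare against line by line. That said, your sketch is a faithful reconstruction of the argument in the cited reference: the same two-failure-mode decomposition (a BMSB/small-ball block argument giving persistent excitation uniformly over an $\epsilon$-net of directions, producing the $\frac{T}{m}a_2^{n_y}e^{-a_3 m}$ term, and a boundary-visit argument for the noise producing the $(1-q_w(\cdot))^{\lceil T/m\rceil}$ term over an $\epsilon$-net of unit-Frobenius perturbations $V$), the same reduction from $\mathrm{diam}(\Theta_T)>\delta$ to the feasibility of some $\Delta=\alpha V$ with $\alpha\gtrsim\delta$ via $\theta_*\in\Theta_T$, and the same accounting that ties $a_1,a_2,a_4$ to the net resolutions and $b_y$.

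Two points where your sketch stops short of a proof. First, the obstacle you correctly name --- extracting $t^*$ and $j^*$ in an $\mathcal{F}_{t^*}$-measurable way --- is resolved by taking $\tau$ to be the \emph{first} time in the block at which $|V_{j^*}^{\phantom{\intercal}}y_\tau|\ge a_1/\sqrt{n_x}$ (with $j^*$ a fixed heavy row of $V$, chosen deterministically from $V$); $\tau$ is a stopping time, the sign of $(Vy_\tau)^{j^*}$ is $\mathcal{F}_\tau$-measurable, and $w_\tau$ is independent of $\mathcal{F}_\tau$, so the one-sided $q_w$ bound applies. You should state this rather than leave it as an obstacle. Second, to obtain the geometric factor $(1-q_w(\cdot))^{\lceil T/m\rceil}$ you cannot treat the block-level boundary-visit events as independent: the blocks share the trajectory, so you need the \emph{conditional} success probability of each block given the failure of all previous blocks to be at least $q_w(\cdot)$ (a tower-property/martingale product argument), which your sketch does not mention. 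Neither issue changes the strategy, but both are needed to make the second term rigorous; since the theorem is anyway cited rather than reproved in this paper, your reconstruction is adequate as a description of where the bound comes from.
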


Observe that system~(\ref{eq:sys}) is linear in the unknown parameters $\theta_{*}$, and we can prove Theorem~\ref{th:sme-con-o-l} by showing that the $\{\mathcal{F}_{t}\}_{t \geq 1}$-adapted process $\{\phi(z_{t})\}_{t \geq 1}$ and $w_{t}$ meet the conditions of the meta-theorem. By Assumptions~\ref{ass: bounded iid semi continuous noises} and~\ref{ass:w-tight}, and since $w_{t} \not \in \mathcal{F}_{t}$, the noise $w_{t}$ fulfills all the requirements of the meta-theorem. Moreover, according to Theorem~\ref{th:BMSB-o-l}, the $\{\mathcal{F}_{t}\}_{t \geq 1}$-adapted process ${\phi(z_{t})}_{t \geq 1}$ satisfies the $(1, s^{2}_{\phi}I_{n_{\phi}}, p_{\phi})$-BMSB condition for some $s_{\phi} > 0$ and $p_{\phi} \in (0, 1]$. Lastly, since the system is LISS, we have $z_{t} \in \mathcal{Z}$ for all $t \geq 0$, where $\mathcal{Z}$ is the compact set defined in the proof of Theorem~\ref{th:BMSB-o-l}. Therefore, there exists a constant $b_{\phi} > 0$ such that $\sup_{t\geq 0} \|\phi(z_{t})\|_{2} \leq b_{\phi}$, completing the proof of the theorem.

Explicitly, this means that for any $m \geq 1$, for any $\delta \in (0, 1)$, when $T > m$, we have:
\begin{equation}\label{eq:prob-diam}
\begin{aligned}
    \mathbb{P}\bigg(\textup{diam}(\Theta_{T}) > \delta\bigg) 
    &\leq 544 \frac{T}{m} n_{\phi}^{2.5} \log(a_{2}n_{\phi})a_{2}^{n_{\phi}}\exp(-a_{3}m)\\
    &\ \ \ + 544 n_{x}^{2.5} n_{\phi}^{2.5} \log(a_{4} n_{x} n_{\phi}) a_{4}^{n_{x}n_{\phi}}\bigg(1-q_{w} \bigg(\frac{a_{1}\delta}{4\sqrt{n_{x}}}\bigg)\bigg)^{\lceil T/ m \rceil},
\end{aligned}
\end{equation}
where $a_{1} = \frac{s_{\phi}p_{\phi}}{4}$, $a_{2} = \frac{64b_{\phi}^{2}}{s_{\phi}^{2}p_{\phi}^{2}}$, $a_{3} = \frac{p_{\phi}^{2}}{8}$, $a_{4} = \frac{16b_{\phi}\sqrt{n_{x}}}{s_{\phi}p_{\phi}}$. 
\end{proof}

\subsection{Proof of Corollary~\ref{cor:sme-rate}}\label{app:C2}
Let us provide two example distributions, truncated Gaussian and uniform, along with their corresponding $q_{w}(\cdot)$ (from~\citep{li2024icml}):

\begin{itemize} 
    \item If $w_{t}$ follows a uniform distribution on $[-w_{\max}, w_{\max}]^{n_{x}}$, then $q_{w}(\ell) = c_{w} \ell$ with $c_{w} = \frac{1}{2w_{\max}}$.     
    \item If $w_{t}$ follows a truncated-Gaussian distribution on $[-w_{\max}, w_{\max}]^{n_{x}}$, generated by a Gaussian distribution with zero mean and covariance matrix $\sigma_{w}^{2}I_{n_{x}}$, then $q_{w}(\ell) = c_{w} \ell$ with $c_{w} = \frac{1}{\min(\sqrt{2\pi}\sigma_{w}, 2w_{\max})} \exp(\frac{-w_{\max}^{2}}{2\sigma_{w}^{2}})$.
\end{itemize}

Now, fix $\epsilon \in (0, 1)$. We want to show that if $q\big(\frac{a_{1}\delta}{4\sqrt{n_{\phi}}}\big) = c_{w}\frac{a_{1}\delta}{4\sqrt{n_{\phi}}}$ and we choose $m \geq 1$ such that
\begin{align}\label{eq:m}
    m \geq \dfrac{1}{a_{3}} \bigg( \log\bigg(\frac{T}{\epsilon}\bigg) + n_{\phi} \log(a_{2}) + 2.5 \log(n_{\phi}) + \log \log (a_{2} n_{\phi}) + \log(544) \bigg),
\end{align}
then for all $T \geq m$, we have
\begin{align}\label{eq:delta}
    \delta \leq O\Bigg( \dfrac{\sqrt{n_{x}} \log\big(\frac{1}{\epsilon}\big) + n_{x}^{1.5}n_{\phi}\log\big(\frac{b_{\phi}\sqrt{n_{x}}}{s_{\phi}p_{\phi}}\big)}{c_{w}s_{\phi}p_{\phi}T} \Bigg)
\end{align}
with probability at least $1-2\epsilon$. 

Let the two terms in right hand-side of~\eqref{eq:prob-diam} be denoted by "term 1" and "term 2". We proceed with the proof in two steps as follows:

\textbf{Step 1: showing that with choice of $m$ in~\eqref{eq:m}, term 1 $\leq \epsilon$ :} 

With this choice of $m$, it is straightforward to see that
\begin{align*}
    544 T n_{\phi}^{2.5} \log(a_{2}n_{\phi})a_{2}^{n_{\phi}}\exp(-a_{3}m) \leq \epsilon,
\end{align*}
and thus term 1 $\leq \epsilon$.

\textbf{Step 2: letting term 2 $= \epsilon$ and showing that $\delta$ satisfies the inequality in~\eqref{eq:delta}:} 

Assuming without loss of generality that $\frac{T}{m}$ is an integer, note that term 2 $= \epsilon$ implies:
\begin{align*}
    q_{w} \bigg(\dfrac{a_{1}\delta}{4\sqrt{n_{x}}}\bigg) &=  \bigg( 1- \bigg(\dfrac{\epsilon}{544 n_{x}^{2.5} n_{\phi}^{2.5} \log(a_{4} n_{x} n_{\phi}) a_{4}^{n_{x}n_{\phi}}} \bigg)^{\frac{m}{T}} \bigg)\\
    &\leq - \log \bigg(\dfrac{\epsilon}{544 n_{x}^{2.5} n_{\phi}^{2.5} \log(a_{4} n_{x} n_{\phi}) a_{4}^{n_{x}n_{\phi}}} \bigg)^{\frac{m}{T}}\\
    &= - \frac{m}{T} \log \bigg(\dfrac{\epsilon}{544 n_{x}^{2.5} n_{\phi}^{2.5} \log(a_{4} n_{x} n_{\phi}) a_{4}^{n_{x}n_{\phi}}} \bigg)\\
    &= \frac{m}{T} \bigg( \log\bigg(\frac{1}{\epsilon}\bigg) + \log(a_{4}) n_{x}n_{\phi}  + 2.5\log(n_{x} n_{\phi})  + \log \log (a_{4} n_{x} n_{\phi}) + \log(544) \bigg).
\end{align*}

If $q_{w} \big(\frac{a_{1}\delta}{4\sqrt{n_{x}}}\big) = c_{w} \frac{a_{1}\delta}{4\sqrt{n_{x}}}$ for some constant $c_{w} > 0$, then:
\begin{align*}
    \delta &\leq  
    \frac{4 \sqrt{n_{x}}m}{c_{w} a_{1} T} \bigg( \log\bigg(\frac{1}{\epsilon}\bigg) + \log(a_{4}) n_{x}n_{\phi}  + 2.5\log(n_{x} n_{\phi})  + \log \log (a_{4} n_{x} n_{\phi}) + \log(544) \bigg)\\
    &\leq \frac{16 \sqrt{n_{x}}m}{c_{w} s_{\phi} p_{\phi} T} O\Bigg(  \log\bigg(\frac{1}{\epsilon}\bigg) +  n_{x}n_{\phi}\log\bigg(\frac{16 b_{\phi}\sqrt{n_{x}}}{s_{\phi}p_{\phi}}\bigg) \Bigg)\\
    &\leq O\Bigg( \dfrac{\sqrt{n_{x}} \log\big(\frac{1}{\epsilon}\big) + n_{x}^{1.5}n_{\phi}\log\big(\frac{b_{\phi}n_{x}}{s_{\phi}p_{\phi}}\big)}{c_{w}s_{\phi}p_{\phi}T} \Bigg).
\end{align*}

Combining these two steps, we conclude that, with probability at least $1-2\epsilon$,
\begin{align*}
    \textup{diam}(\Theta_{T}) \leq 
    O\Bigg( \dfrac{\sqrt{n_{x}} \log\big(\frac{1}{\epsilon}\big) + n_{x}^{1.5}n_{\phi}\log\big(\frac{b_{\phi}n_{x}}{s_{\phi}p_{\phi}}\big)}{c_{w}s_{\phi}p_{\phi}T} \Bigg).
\end{align*}

\subsection{Proof of Corollary~\ref{cor:sme-con-c-l}}\label{app:C3}
This corollary's proof builds on Lemma~\ref{lem:BMSB_c_l} in Appendix~\ref{app:B2} and closely aligns with the proofs of Theorem~\ref{th:sme-con-o-l} and Corollary~\ref{cor:sme-rate}.

\section{Numerical Experiments}\label{appendix:D}

This section provides details on the simulation experiments. 

\setcounter{section}{4}
\subsection{Pendulum}\label{app:pend}
The ground truth for the unknown parameters for pendulum example in Example~\ref{example: pendulum} is set to be
\begin{align*}
    m = 0.1 \ \hbox{(kg)},\ \ l = 0.5 \ \hbox{(m)},
\end{align*}
and discretization time step in our numerical experiments is  $dt = 0.01 \ \hbox{(s)}$. The control input is a simple feedback controller $u_{t} = - k \dot{\alpha}_{t} + \eta_{t}$. In Figures~\ref{fig:lse_pend_uni} and ~\ref{fig:lse_pend_trunc} we choose $k=2$ and in Figures~\ref{fig:sme_pend_uni}, ~\ref{fig:sme_pend_trunc} and~\ref{fig:pend_sme} we choose $k=0.1$. Note there are two unknown parameters in this pendulum example as follows:
\begin{align*}
    \theta_{1} = \frac{1}{l},\ \ \theta_{2} = \frac{1}{ml^{2}}.
\end{align*}

\subsection{Quadrotor}\label{app:quad}
The ground truth for the unknown parameters for quadrotor example in Example~\ref{example: quadrotor} is set to be
\begin{align*}
    m &= 0.468\ (\text{kg}),\\
    I_{xx} = 4.856\times 10^{-3}\ (\text{kg/m}^2),\ \ 
    I_{yy} &= 4.856\times 10^{-3}\ (\text{kg/m}^2),\ \  
    I_{zz} = 8.801\times 10^{-3}\ (\text{kg/m}^2).     
\end{align*}
The discretization time step in our numerical experiments is  $dt = 0.01 \ \hbox{(s)}$. The control input is a control policy plus i.i.d. noise. The control policy on altitude and the three Euler angles is borrowed from~\citep{alaimo2013mathematical}. The controller gains in our numerical experiments are chosen as: 
\begin{align*}
    &kp_z = 0.75,\ \ kd_z = 1.25,\\
    &kp_{\phi} = 0.03,\ \ kd_{\phi} = 0.00875,\\ 
    &kp_{\theta} = 0.03,\ \ kd_{\theta} = 0.00875,\\
    &kp_{\psi} = 0.03,\ \ kd_{\psi} = 0.00875.
\end{align*}

Note that there are seven unknown parameters in this quadrotor example, as follows:
\begin{align*}
    \theta_{1} &= \frac{1}{m},\\
    \theta_{2} = \frac{1}{I_{xx}},\ \ \ \theta_{3} = \frac{I_{yy}-I_{zz}}{I_{xx}},\ \ \theta_{4} = \frac{1}{I_{yy}}&,\ \ \ \theta_{5} = \frac{I_{zz}-I_{xx}}{I_{yy}},\ \ \theta_{6} = \frac{1}{I_{zz}},\ \ \ \theta_{7} = \frac{I_{xx}-I_{zz}}{I_{zz}}.
\end{align*}

Figure~\ref{fig:sm_quadrotor_2} displays the uncertainty set estimated by SME for the seven unknown parameters in the quadrotor example for various trajectory lengths, with \(\eta_{t}\) and \(w_{t}\) being i.i.d. samples from truncated-Gaussian distributions. The uncertainty sets are observed to shrink as the trajectory length increases, consistent with our theoretical results. Note that the ground truth value is contained within all the uncertainty sets.
\begin{figure}[ht]
    \centering    \includegraphics[width=1\linewidth]{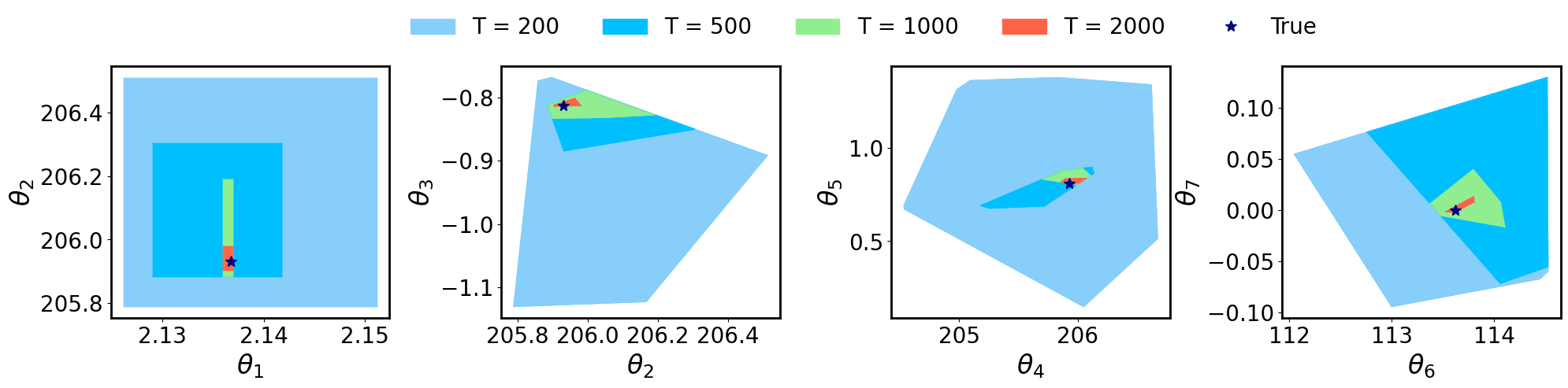}
    \caption{\footnotesize 2D projections of the uncertainty set estimated by SME for the unknown parameters of the quadrotor example. The noises and disturbances are i.i.d generated from $\texttt{truncated-Gaussian}(0,0.5,[-1,1])$. 
    }
    \label{fig:sm_quadrotor_2}
\end{figure}

\section{Numerical Estimation of BMSB Parameters $(s_{\phi}, p_{\phi})$}\label{appendix:E}

We compare the empirical rates of LSE and SME with their theoretical counterparts in Section~\ref{sec:num-exp}. The theoretical results presented in Theorem~\ref{th:lse-con-o-l} and Corollary~\ref{cor:sme-rate} rely on the parameters \(b_{\phi}\), \(\bar{b}_{\phi}\), and the BMSB parameters \((s_{\phi}, p_{\phi})\). However, the explicit relationship of these parameters with system, noise, and disturbance characteristics such as \(n_{x}\), \(n_{u}\), \(n_{\phi}\), \(\sigma_{u}\), and \(\sigma_{w}\) is not known and we will address this in our future work. Consequently, we estimate these parameters numerically and utilize these estimates to calculate the theoretical rates discussed in Section~\ref{sec:num-exp}. While $b_{\phi}$ and $\bar{b}_{\phi}$ are straightforward to estimate, special attention is required to estimate the BMSB parameters. This section is dedicated to describing this estimation process.
 
For this, consider a system of the form~\eqref{eq:sys}. For this system, our goal is to estimate \( s_{\phi} \) and \( p_{\phi} \), where  
\[
    p_{\phi} = \inf_{\mathcal{F}_{t}, t \geq 0}\ \inf_{\|v\|_{2}=1}\  {\mathbb{P}}\bigg(|v^T\phi(z_{t+1})| \geq s_{\phi} \ \big|\ {\mathcal{F}}_{t}\bigg)
\]
numerically. First, observe that \( \phi(z_{t+1})\ |\ {\mathcal{F}}_{t} = \phi( \theta_{*}^{T}\phi(z_{t}) + w_{t}, u_{t+1} ) \), where \( z_{t} \in \mathcal{F}_{t} \). This implies that \( \phi(z_{t+1})\ |\ {\mathcal{F}}_{t} \) is a random variable influenced by \( w_{t} \) and \( u_{t+1} \). We proceed by fixing \( s_{\phi} = \bar{s} \) (for some $\bar{s} > 0$) and empirically estimate \( p_{\phi} \). To accomplish this, we first select a time horizon \( T \) and generate several trajectories of length \( T \) for the system. Let \( \mathcal{D}^{T} \) represent the set of these trajectories, while \( \mathcal{D}^{t} \) denotes a subset containing all trajectories up to \( t \leq T \). Additionally, we create multiple vectors \( v \in \mathbb{R}^{n_{\phi}} \) such that \( \|v\|_{2}=1 \); we refer to this collection as \( \bar{\mathcal{V}} \). These vectors are randomly sampled from a Gaussian distribution and subsequently normalized.

We then estimate \( \bar{p} \) as:
\[
\bar{p} = \min_{t \in [T]} \min_{z \in \mathcal{D}^{t}} \min_{v \in \bar{\mathcal{V}}}\  \mathbb{P} \bigg(|v^T\phi(\theta_{*}^{T}\phi(z) + w_{t}, u_{t+1})| \geq \bar{s} \bigg).
\]

As \( T \) increases, along with the number of trajectories and vectors \( v \), the minimum estimates will more accurately reflect the infimums. In this context, \( \bar{p} \) represents the minimum of \( \mathbb{P} \bigg(|v^T\phi(\theta_{*}^{T}\phi(z) + w_{t}, u_{t+1})| \geq \bar{s} \bigg) \) across all combinations in \( [T] \times \mathcal{D}^{T} \times \bar{\mathcal{V}} \). For each combination in this set, we estimate the probability \( \mathbb{P} \bigg(|v^T\phi(\theta_{*}^{T}\phi(z) + w_{t}, u_{t+1})| \geq \bar{s} \bigg) \) using Monte Carlo simulations. This process entails generating multiple random samples based on the distributions of \( w_{t} \) and \( u_{t+1} \), verifying whether each sample satisfies the condition \( |\bar{v}^T\phi(z)| \geq \bar{s} \), and tallying how many samples meet this criterion. We repeat this procedure for various values of \( \bar{s} \) until we identify a pair of \( (\bar{s}, \bar{p}) \) such that \( \bar{p} \in (0, 1) \). The estimated probability is calculated as the ratio of the count of successful samples to the total number of samples. According to the law of large numbers, this ratio converges to the true probability as the number of samples increases. For our estimations, we select \( T=50 \), \( |\bar{\mathcal{V}}|=1000 \), and \( |\mathcal{D}^{T}|=20 \).

\end{appendix}


\end{document}